\newtheorem{theorem}{Theorem}
\theoremstyle{theorem}
\newtheorem{proposition}[theorem]{Proposition}
\theoremstyle{proposition}
\newtheorem{lemma}[theorem]{Lemma}
\theoremstyle{lemma}
\newcommand{\T}{\mathrm{T}}
\newcommand{\zero}{0}
\newcommand{\one}{I}
\newcommand{\dis}[2][]{\Delta_{#1}\!\left(#2\right)}
\DeclareMathOperator*{\supp}{supp}
\DeclareMathOperator*{\lcm}{lcm}
\DeclareMathOperator*{\spn}{span}
\DeclareMathOperator*{\poly}{poly}
\DeclareMathOperator*{\argmin}{arg\,min}
\begin{document}

\title{Finding the disjointness of stabilizer codes is NP-complete}
	
\author{John Bostanci}
\affiliation{Institute for Quantum Computing, University of Waterloo, Waterloo, ON N2L 3G1, Canada}
\author{Aleksander Kubica}
\affiliation{Perimeter Institute for Theoretical Physics, Waterloo, ON N2L 2Y5, Canada}
\affiliation{Institute for Quantum Computing, University of Waterloo, Waterloo, ON N2L 3G1, Canada}
\affiliation{AWS Center for Quantum Computing, Pasadena, CA 91125, USA}
\affiliation{California Institute of Technology, Pasadena, CA 91125, USA}

\begin{abstract}
The disjointness of a stabilizer code is a quantity used to constrain the level of the logical Clifford hierarchy attainable by transversal gates and constant-depth quantum circuits.
We show that for any positive integer constant $c$, the problem of calculating the $c$-disjointness, or even approximating it to within a constant multiplicative factor, is NP-complete.
We provide bounds on the disjointness for various code families, including the CSS codes, concatenated codes and hypergraph product codes.
We also describe numerical methods of finding the disjointness, which can be readily used to rule out the existence of any transversal gate implementing some non-Clifford logical operation in small stabilizer codes.
Our results indicate that finding fault-tolerant logical gates for generic quantum error-correcting codes is a computationally challenging task.
\end{abstract}

\maketitle


Designing fault-tolerant schemes is an essential step toward scalable universal quantum computation~\cite{Shor1996,Steane1997,Knill2005,Knill2005a}.
To protect quantum information from the detrimental effects of noise one typically encodes it into an quantum error-correcting code.
In addition to reliably storing quantum information, one also seeks to perform fault-tolerant logical operations on the encoded information.

One of the simplest ways to realize fault-tolerant logical operations is via transversal gates, which act independently on individual physical qubits and thus do not spread errors in an uncontrollable way.
Recently, many works have been devoted to transversal gates implementing non-Clifford logical operations in topological quantum codes~\cite{Bombin2013,Kubica2015a,Watson2015,Kubica2015,Bombin2018,Jochym-OConnor2021,Vasmer2021}
and the consequent universal quantum computation schemes
~\cite{Bombin2016dim,Bombin2018a,Kubicathesis,Vasmer2019,Brown2020,Iverson2021}.
Transversal gates also prove useful for magic state distillation~\cite{Knill2004,Knill2004a,Bravyi2005}, as they form the backbone of many distillation protocols; see Ref.~\cite{Beverland2021} and the references therein.

Logical operations implemented via transversal gates are somewhat limited.
Namely, the computational universality of transversal gates is incommensurate with the capability of the underlying code to correct errors, as exemplified by the Eastin-Knill theorem~\cite{Eastin2009,Zeng2011}, and its approximate versions~\cite{Faist2019,Woods2020,Kubica2020}.
More generally, bounded-spread logical operators, which propagate errors in a benign way and include constant-depth quantum circuits and locality-preserving operators, are also computationally limited~\cite{Bravyi2013,Pastawski2014,Beverland2014,Jochym-OConnor2018,Webster2020}.

Although systematic approaches to finding transversal logical gates for generic quantum error-correcting codes are not known, for stabilizer codes~\cite{Gottesman1996} we can rule out the possibility of implementing certain
logical operations.
Namely, if $M$ is the level of the logical Clifford hierarchy~\cite{Gottesman1999a} attainable by transversal logical gates, then the following upper bound holds
\begin{equation}
\label{eq_level_bound}
M \leq \left\lfloor \log_\Delta \left(d_\uparrow/d_\downarrow\right)\right\rfloor + 2,
\end{equation}
given the min-distance $d_\downarrow > 1$, max-distance $d_\uparrow$ and disjointness $\Delta$ of the stabilizer code \cite{Jochym-OConnor2018}.
The disjointness, roughly speaking, captures the maximal number of mostly non-overlapping representatives of any given non-trivial logical Pauli operator.
Until now, however, the problem of finding the disjointness as well as its computational hardness have not been explored.

In our work we focus on the problem of finding the disjointness of stabilizer codes, which serves as a proxy to understanding what are the admissible fault-tolerant logical gates.
First, in Section~\ref{sec_hardness} we show that for any positive integer $c$ it is NP-complete to calculate the $c$-disjointness, as well as to approximate it to within any constant multiplicative factor.
Our result thus indicates that finding fault-tolerant logical gates that can be implemented with generic quantum error-correcting codes is a computationally challenging task.
Then, in Section~\ref{sec_practice} we discuss numerical methods of finding the disjointness, which we illustrate with the example of the $[\![ 14,3,3 ]\!]$ stabilizer code~\cite{Landahl2020}.
We also provide a strengthening of the bound in Eq.~\eqref{eq_level_bound}, which subsequently rules out the existence of any transversal logical non-Clifford gate in the aforementioned $[\![ 14,3,3 ]\!]$ stabilizer code.
Lastly, in Section~\ref{sec_bounds} we provide bounds on the disjointness for various code families, including the CSS codes~\cite{Calderbank1996, Steane1996CSS}, concatenated codes~\cite{Knill1996} and hypergraph product codes~\cite{Tillich2014}.

\section{Preliminaries}
\label{section:preliminaries}

In this section, we briefly discuss basic constructions of stabilizer codes, as well as the notions of code distance and disjointness.
We also comment on certain graph-theory problems and their computational complexity.

\subsection{Stabilizer code constructions}

Stabilizer codes are an important class of quantum error-correcting codes.
A stabilizer code is defined by its stabilizer group $\mathcal S$, i.e., an Abelian subgroup of the Pauli group that does not contain $-I$.
In what follows we identify the stabilizer code with its stabilizer group.
The code space of the stabilizer code $\mathcal S$ is the simultaneous $(+1)$-eigenspace of all of the stabilizer operators.
We denote by $[\![n, k, d]\!]$ a stabilizer code that encodes $k$ logical qubits into $n$ physical qubits and has code distance $d$.
To specify the $[\![ n,k,d]\!]$ stabilizer code, we can provide a binary matrix of size $(n-k)\times 2n$, whose rows correspond to independent stabilizer generators of $\mathcal S$.
For concreteness, we identify a bit string $(b_1,\ldots,b_{2n})\in \{0,1\}^{2n}$ with the following Pauli operator $\bigotimes_{i=1}^n X^{b_i}_i Z^{b_{i+n}}_i$, where $P_i$ denotes a Pauli $P\in\{X,Z\}$ operator acting on qubit $i\in[n] = \{1,\ldots,n\}$.
We say that an operator is of $X$- or $Z$-type iff it is a tensor product of either Pauli $X$ or Pauli $Z$ operators (and the identity operators).

For any stabilizer code, logical Pauli operators, which are the elements of the the normalizer of the stabilizer group in the Pauli group, can always be implemented as tensor products of single-qubit Pauli operators.   
We write $\overline L$ to represent a logical Pauli operator itself, as well as the set of its equivalent representatives.
Also, we write $\mathcal L$ to denote the set of all non-trivial logical Pauli operators.

{\bf CSS codes.---}A stabilizer code is a CSS code iff there exists a choice of stabilizer generators such that every generator is either a Pauli $X$- or $Z$-type operator.
Given a CSS code with code parameters $[\![n,k,d]\!]$, we can always choose its logical Pauli operators in a way that for every $i\in [k]$ the logical Pauli $\overline X_i$ and $\overline Z_i$ operators are implemented via Pauli $X$- and $Z$-type operators, respectively.
We refer to such a set of logical operators $\{\overline X_i, \overline Z_i \}_{i\in [k]}$ as a standard logical basis.
Then, for every logical Pauli operator $\overline L$ we can find its decomposition in a standard logical basis, i.e., $\overline L = \overline L^X \overline L^Z$, where $\overline L^X$ and $\overline L^Z$ are $X$- and $Z$-type logical Pauli operators, respectively.

{\bf Concatenated stabilizer codes.---}Given two stabilizer codes $\mathcal S_1$ and $\mathcal S_2$ with parameters $[\![n_1, k_1, d_1]\!]$ and $[\![n_2, 1, d_2]\!]$, respectively, we can concatenate them to obtain a new stabilizer code $\mathcal S_1 \lhd \mathcal S_2$.  
To construct the concatenated code $\mathcal S_1 \lhd \mathcal S_2$, we first encode the logical information into the stabilizer code $\mathcal S_1$, then we encode each and every qubit of the stabilizer code $\mathcal S_1$ into the stabilizer code $\mathcal S_2$.
The concatenated code $\mathcal S_1 \lhd \mathcal S_2$ is a stabilizer code with parameters $[\![n_1 n_2, k_1, d]\!]$, where $d \geq d_1 d_2$.

{\bf Hypergraph product codes.---}Given two full-rank binary matrices $H_1$ and $H_2$ of size $m_1\times n_1$ and $m_2\times n_2$, respectively, the corresponding hypergraph product code is specified by the following binary matrix
\begin{equation}
\label{eq_hypergraph_matrix}
\left(\begin{array}{c|c|c|c}
H_1 \otimes \one_{m_2} & \one_{m_1} \otimes H_2
& \multicolumn{2}{c}{\zero_{m_1m_2, n_1m_2+n_2 m_1}}\\
\hline
\multicolumn{2}{c|}{\zero_{n_1n_2, n_1m_2+n_2 m_1}}
&\one_{n_1}\otimes H_2^{\T} & H_1^{\T} \otimes I_{n_2}
\end{array}\right),
\end{equation}
where $H_*^\T$ denotes the transpose of $H_*$, $\zero_{a, b}$ and $\one_a$ are the zero matrix and the identity matrix of size $a\times b$ and $a\times a$, respectively.
Note that hypergraph product codes are CSS codes.

\subsection{Distance and disjointness}

Let $\overline L \in \mathcal L$ be any non-trivial logical Pauli operator for the stabilizer code $\mathcal S$.
Following Ref.~\cite{Jochym-OConnor2018}, we define the distance $d(\overline L)$ to be the size of the support of the smallest representative of $\overline L$, i.e.,
\begin{equation}
d(\overline L) = \min_{L\in \overline L} |\supp L|,
\end{equation}
and introduce the notions of the min-distance and max-distance as follows
\begin{equation}
d_\downarrow = \min_{\overline L \in \mathcal L} d(\overline L),\quad d_\uparrow = \max_{\overline L \in \mathcal L} d(\overline L).
\end{equation}
Note that the min-distance is the same as the standard stabilizer code distance.

Let $\mathcal A \subseteq \overline L$ be a subset of representatives of $\overline L$ or, more generally, a multiset that allows for multiple instances for each of the representatives of $\overline L$.
We say that $\mathcal A$ is a $c$-disjoint collection of representatives of $\overline L$, where $c$ is a positive integer, iff for every qubit there are at most $c$ elements of $\mathcal A$ that are supported on that qubit.
We then define the $c$-disjointness $\dis[c]{\overline L}$ to be the size of the largest $c$-disjoint collection for $\overline{L}$ divided by $c$, i.e.,
\begin{eqnarray}
\Delta_c(\overline{L}) = c^{-1}\max_{\mathcal A \subseteq \overline{L}}\{|\mathcal A|
&\mathrel{:}& \text{ at most $c$ elements $L \in \mathcal A$ }\\
&& \text{are supported on any qubit}\}.\quad
\end{eqnarray}
Subsequently, the disjointness $\dis{\mathcal S}$ of the stabilizer code $\mathcal S$ is defined as follows
\begin{equation}
\label{eq_disjointness_new}
\Delta(\mathcal S) = \sup_{c \geq 1} \min_{\overline{L} \in \mathcal L} \Delta_c(\overline{L}).
\end{equation}
Note that the disjointness defined here, which allows representatives of $\overline L$ to be selected multiple times, is greater or equal to the disjointness defined in Ref.~\cite{Jochym-OConnor2018}.
Thus, by using our definition in Eq.~\eqref{eq_level_bound} we may obtain a tighter bound on the level of the logical Clifford hierarchy attainable by transversal logical gates.
At the same time, it is not obvious that the supremum in Eq.~\eqref{eq_disjointness_new} can be admitted for some finite positive integer $c$.
We establish this fact in Theorem~\ref{theorem_finite_c} and Proposition~\ref{proposition_disjointness_equiv}.

Finally, for any positive integer $c$ we introduce the following decision problem based on the $c$-disjointness.
\begin{quote}
\textbf{$c$-DISJOINTNESS}\\
\textbf{Input}: A full-rank binary matrix of size $(n-k) \times 2n$ specifying a stabilizer code $\mathcal S$,
a string of $2n$ bits representing a logical Pauli operator $\overline{L}$ and a positive integer $a$.\\
\textbf{Question}: Is the size of the largest $c$-disjoint collection of representatives of $\overline L$ greater or equal to $a$, i.e.,
$c\dis[c]{\overline{L}} \geq a$?
\end{quote}

\begin{figure*}[ht!]
\centering
(a)\includegraphics[width=0.27 \textwidth]{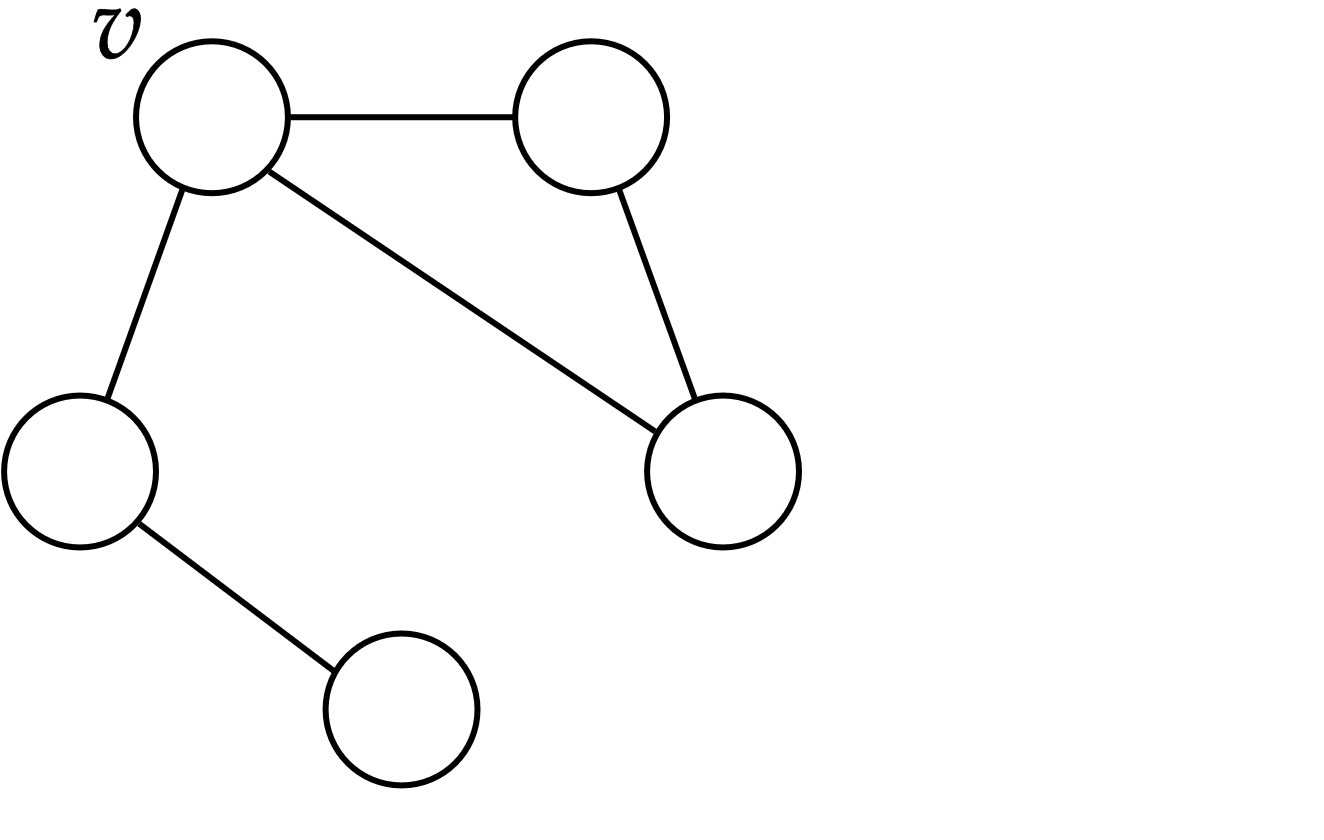}
(b)\includegraphics[width=0.27 \textwidth]{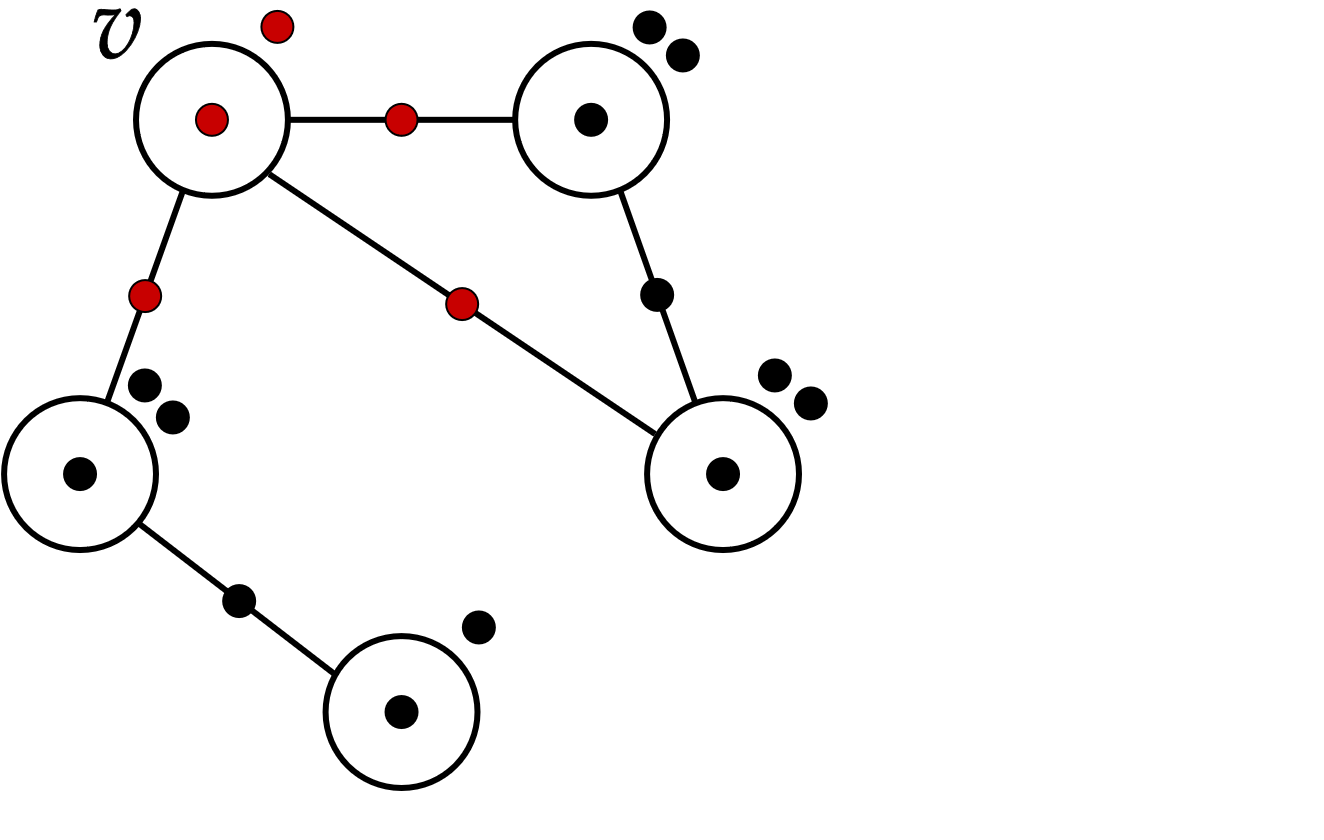}
(c)\includegraphics[width=0.27 \textwidth]{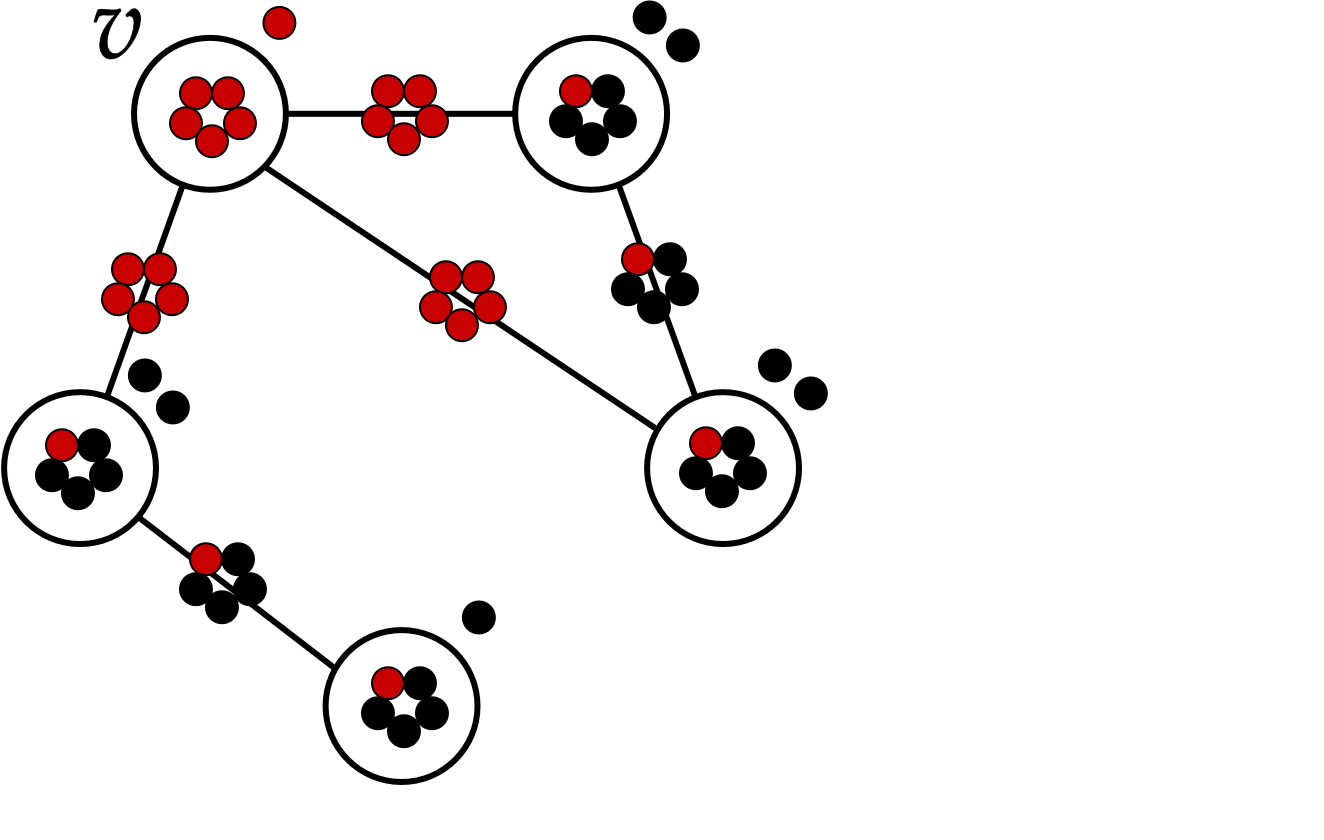}
\caption{
(a) A graph $G = (V,E)$ can be used to define a CSS stabilizer code $\mathcal{S}^G_c$.
In (b) and (c), we illustrate the construction for $c=1$ and $c=2$, respectively.
Qubits (black and red dots) are placed at vertices and on edges of the graph $G$.
We depict in red the support of the representative $X(v)$ of the logical Pauli operator $\overline L^G_c$ that is associated with the vertex $v\in V$.
}
\label{fig_graphcode}
\end{figure*}

\subsection{Computational complexity and graph theory}

NP is the complexity class of problems that can be solved in polynomial time using a non-deterministic Turing machine, which can perform multiple operations at the same time in parallel at every time step, and accepts if any one of the parallel operations leads to an accepting state.
A problem is NP-hard if every problem in NP can be reduced to it in polynomial time, and a problem is NP-complete if it is both NP-hard and in NP.

Let $G = (V,E)$ be a graph with the sets of vertices $V$ and edges $E$.
We say that a subset of vertices $V'\subseteq V$ is an independent set for the graph $G$ iff no two vertices in $V'$ are joined by an edge in $E$.
Moreover, we say that a collection $\mathcal A$ comprising subsets of the vertices of $G$ is an independent collection for $G$ iff any two different $A, A' \in \mathcal A$ are disjoint and no two vertices belonging to, respectively, $A$ and $A'$ are joined by an edge in $E$.
Note that an independent set is a special case of an independent collection.
We denote the size of a maximum independent set for $G$ by $\alpha(G)$ and refer to it as the independence number of $G$.
Then, the following decision problem is NP-complete~\cite{Garey1979}.  
\begin{quote}
\textbf{INDEPENDENT SET}\\
\noindent \textbf{Input}: A graph $G = (V, E)$ and a positive integer $a$.\\
\textbf{Question}: Is the independence number of $G$ greater or equal to $a$, i.e., $\alpha(G) \geq a$?
\end{quote}
Furthermore, for generic graphs and a positive real number $\epsilon$ the problem of approximating $\alpha(G)$ up to a multiplicative factor of $|V|^{1-\epsilon}$ is NP-hard~\cite{hastad1996clique, v003a006}.
We use this result in our reduction in the following section.

\section{Hardness of finding and approximating the $c$-disjointness}\label{sec:hardness}
\label{sec_hardness}

We start this section by constructing a CSS stabilizer code $\mathcal{S}^G_c$ and a logical Pauli operator $\overline{L}^G_c$ for any finite graph $G$ and positive integer $c$.
Then, we show that the $c$-disjointness $\dis[c]{\overline{L}^G_c}$ can be related to the independence number $\alpha(G)$ of the graph $G$.
This, in turn, allows us to prove our main theorem by reducing the problem of finding the maximum independent set for $G$ to the problem of finding the $c$-disjointness $\dis[c]{\overline{L}^G_c}$.

\subsection{Constructing the CSS stabilizer code}
\label{sec_code}

Let $G = (V,E)$ be a graph and $c$ be a positive integer constant.
To define a CSS stabilizer code $\mathcal{S}^G_c$ associated with the graph $G$, we first place $|V| \choose c-1$ qubits on every vertex $v\in V$ and on every edge $e\in E$; see Fig.~\ref{fig_graphcode} for an illustrative example.
We label each qubit by a pair $(v,\nu)$ or $(e,\nu)$, where $\nu \subseteq \left[|V|\right]$ is a subset of $c-1$ integers from $\left[|V|\right] = \{1,\ldots, |V|\}$.
Additionally, we place one or two qubits at each vertex $v$ depending on whether ${|V| \choose c-1}(\deg v +1)$ is odd or even respectively, where $\deg v$ denotes the degree of $v$.
We label those additional qubits by a pair $(v,i)$, where $i\in[2]$.
We denote by $P_q$ a Pauli operator $P = X,Z$ supported on qubit with label $q$.
We denote by $Q(v)$ and $Q(e)$ all the qubits placed at the vertex $v$ and on the edge $e$, i.e., all the qubits with labels $(v,*)$ and $(e,*)$, respectively.
We also denote by $Q$ the set of all the qubits, i.e.,
\begin{equation}
Q = \bigcup_{v\in V} Q(v) \cup \bigcup_{e\in E} Q(e).
\end{equation}

The stabilizer group $\mathcal{S}^G_c$ associated with $G$ and $c$ is
\begin{equation}
\mathcal{S}^{G}_c = \bigg\langle X(u) X(v), \prod_{q\in Q} Z_q \mathrel{\bigg |} u,v\in V \bigg\rangle,
\end{equation}
where $X(v)$ denotes a Pauli $X$ operator associated with a vertex $v\in V$, which we define as follows
\begin{equation}
X(v) = \prod_{q\in Q(v)} X_q \prod_{e\ni v} \bigg(\prod_{q \in Q(e)} X_{q}\bigg).
\end{equation}
Note that for any $v\in V$ the Pauli $X$-type operator $X(v)$ is supported on the even number of qubits.
Thus, $X(v)$ and $\prod_{q\in Q} Z_q$ commute and $\mathcal{S}^G_c$ is an Abelian subgroup of the Pauli group satisfying $-I\not\in\mathcal{S}^G_c$.
Moreover, for any $u,v\in V$ the Pauli $X$ operators $X(u)$ and $X(v)$ are the representatives of the same non-trivial logical Pauli operator, which we denote by $\overline L^G_c$.
The last statement follows from the fact that $X(u)$ and $X(v)$ commute with the all stabilizer operators, are not contained in the stabilizer group $\mathcal{S}^G_c$, and their product forms a stabilizer operator $X(u)X(v)$. 
For convenience, for any subset of vertices $A\subseteq V$ we define
\begin{equation}
X(A) = \prod_{v\in A} X(v).
\end{equation}
Note that if $|A| \equiv 0 \mod 2$, then $X(A)$ is a stabilizer operator; otherwise, $X(A)$ is a representative of the logical Pauli operator $\overline{L}^G_c$.

We remark that if the graph $G$ has at least $c+2$ vertices, i.e., $|V|\geq c+2$, then the stabilizer code $\mathcal{S}^G_c$ associated with $G$ and $c$ is error-detecting, i.e., its min-distance satisfies $d_\downarrow > 1$.
This, in turn, implies that the disjointness of $\mathcal{S}^G_c$ is greater than one, i.e., $\dis {\mathcal{S}^G_c} > 1$; see Lemma~2(ii) in Ref.~\cite{Jochym-OConnor2018}.
To establish the claim that $d_\downarrow > 1$, it suffices to check that every single-qubit Pauli operator anticommutes with some stabilizer operator from $\mathcal{S}^G_c$.
Since the stabilizer operator $\prod_{q\in Q} Z_q$ is supported on every qubit, any single-qubit Pauli $X$ or $Y$ operator anticommutes with it.
For any vertex $v\in V$ a Pauli $Z$ operator on qubit $(v,i)$ or $(v,\nu)$ anticommutes with a stabilizer operator $X(u)X(v)$ for any vertex $u\in V\setminus \{ v\}$ or $u\in V\setminus (\{ v\} \cup \nu)$, respectively.
Lastly, for any edge $e\in E$ a Pauli $Z$ operator on qubit $(e,\nu)$ anticommutes with a stabilizer operator $X(u)X(v)$ for any vertex $u \in  V\setminus (\{w,w'\} \cup \nu)$, where $w,w'\in V$ are two vertices incident to $e$.
Note that we use $|V| \geq c + 2$ to guarantee that the set $V\setminus (\{w,w'\} \cup \nu)$ is non-empty.

\subsection{Relating the $c$-disjointness to the independence number}

Let $\mathcal S^G_c$ and $\overline L^G_c$ be a stabilizer code and a logical Pauli operator, which are associated with the graph $G = (V,E)$ and the positive integer $c$.
We abuse terminology and say that a collection $\mathcal A$ comprising subsets of the vertices of $G$ is $c$-disjoint iff $\{ X(A) \mathrel{|} A\in\mathcal A\}$ is a $c$-disjoint collection for the logical operator $\overline L^G_c$ of the stabilizer code $\mathcal S^G_c$.
We refer to $X(A)$ as a representative from $\mathcal A$.
We remark that $\mathcal A$ is allowed to be a multiset.
In what follows, we establish three technical lemmas.

\begin{lemma}
\label{lemma_disjoint}
Let $G = (V,E)$ be a graph, $V' \subseteq V$ be an independent set and $c$ be a positive integer.
Then, a collection $\mathcal A =  \{ \{v\} \mathrel{|} v \in V'\}$ is $c$-disjoint.
Moreover, if $|V'| (c+1) \equiv 1 \mod 2$, then $\mathcal A \cup \{V'\}$ is $c$-disjoint, too.
\end{lemma}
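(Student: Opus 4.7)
The plan is to treat both claims by directly counting, for each qubit $q \in Q$, how many elements of the proposed collection have $q$ in their support, then to use the independence of $V'$ to rule out edges with both endpoints in $V'$.

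For the first claim, I would observe that $X(v)$ is supported precisely on $Q(v) \cup \bigcup_{e \ni v} Q(e)$, directly from its definition. A qubit $q \in Q(u)$ sitting at a vertex lies in the support of $X(v)$ only if $v = u$, so within $\mathcal{A}$ it is hit at most once. A qubit $q \in Q(e)$ sitting on an edge $e = \{w, w'\}$ lies in the support of $X(v)$ only if $v \in \{w, w'\}$, but independence of $V'$ forces $|V' \cap \{w, w'\}| \leq 1$, so $q$ is again hit at most once. Hence $\mathcal{A}$ is $1$-disjoint and therefore $c$-disjoint for every $c \geq 1$.

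For the second claim, the hypothesis $|V'|(c+1) \equiv 1 \pmod 2$ unpacks into $|V'|$ odd and $c$ even, so in particular $c \geq 2$. The parity $|V'|$ odd makes $X(V') = \prod_{v \in V'} X(v)$ a representative of $\overline{L}^G_c$ rather than a stabilizer (products of an even/odd number of representatives of the same non-trivial logical Pauli are stabilizers/representatives, respectively), so it is legitimate to include $V'$ in the collection. I would then compute the support of $X(V')$ by the same case split: a vertex qubit $q \in Q(u)$ acquires a factor of $X_q$ from $X(V')$ iff $u \in V'$, while an edge qubit $q \in Q(e)$ with $e = \{w, w'\}$ acquires exactly $|V' \cap \{w, w'\}|$ factors of $X_q$, which by independence of $V'$ is $0$ or $1$, with no pairs cancelling. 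Thus each qubit lies in the support of $X(V')$ at most once, so adjoining $V'$ to $\mathcal{A}$ raises the per-qubit multiplicity by at most $1$, capping the total at $2 \leq c$. Therefore $\mathcal{A} \cup \{V'\}$ is $c$-disjoint.

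The proof is essentially careful bookkeeping; the only place the graph-theoretic hypothesis is really used is in ruling out cancellations on edges with both endpoints in $V'$, and the only place the parity hypothesis is used is in simultaneously giving $c \geq 2$ and ensuring $X(V')$ is a valid representative. The main thing to be careful about, therefore, is not to accidentally double-count or miss a cancellation when expanding the product $X(V')$ on edge qubits.
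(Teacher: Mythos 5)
Your argument hinges on reading the support of $X(v)$ literally from the displayed formula, i.e.\ $\supp X(v) = Q(v)\cup\bigcup_{e\ni v}Q(e)$, so that a qubit $(u,\nu)$ is touched by $X(v)$ only when $v=u$. That reading cannot be the intended one: it would make the labels $\nu$ completely inert, it contradicts the error-detection argument in Section~\ref{sec_code} (which needs $u\notin\nu$ in order for $Z_{(v,\nu)}$ to anticommute with $X(u)X(v)$), and it would falsify Lemma~\ref{lemma_disjointness} --- under your reading $\mathcal A$ is $1$-disjoint, so taking $c$ copies of each $\{v\}$, $v\in V'$, yields a $c$-disjoint multiset of size $c\,\alpha(G)$ and hence $\dis[c]{\overline L^G_c}\geq\alpha(G)$ rather than roughly $\alpha(G)/c$, which would destroy the reduction. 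The convention actually used throughout Section~\ref{sec_hardness} is that $X(v)$ acts on a qubit $(u,\nu)$ iff $v\in\{u\}\cup\nu$, on $(u,i)$ iff $v=u$, and on $(e,\nu)$ with $e=\{w,w'\}$ iff $v\in\{w,w'\}\cup\nu$; consequently $X(A)$ is supported on $(u,\nu)$ iff $|A\cap(\{u\}\cup\nu)|$ is odd, and similarly for edge qubits.

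Under that convention your counting breaks in both halves. For the first claim, a qubit $(u,\nu)$ can support up to $|\nu|+1=c$ representatives from $\mathcal A$ (namely all $X(w)$ with $w\in(\{u\}\cup\nu)\cap V'$), not one, so $\mathcal A$ is $c$-disjoint but in general not $1$-disjoint; the independence of $V'$ is what caps the count at $c$ rather than $c+1$ on an edge qubit $(e,\nu)$, since at most one endpoint of $e$ lies in $V'$. For the second claim, your ``multiplicity at most $2\le c$'' bound fails because some qubits already carry $c$ representatives from $\mathcal A$; the correct argument is a parity one: $|V'|(c+1)\equiv 1 \bmod 2$ forces $c$ to be even, a vertex qubit $(u,\nu)$ that saturates the bound must have $\{u\}\cup\nu\subseteq V'$, hence $|V'\cap(\{u\}\cup\nu)|=c$ is even and $X(V')$ is \emph{not} supported there (and analogously for edge qubits), so adjoining $V'$ never pushes a qubit above $c$. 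Your observation that $|V'|$ odd makes $X(V')$ a genuine representative of $\overline L^G_c$ is correct and is indeed needed, but the rest of the bookkeeping has to be redone with the $\nu$-dependent supports.
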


\begin{proof}
If $c=1$, then $|V'| (c+1) \equiv 0 \mod 2$ and one can straightforwardly show that $\mathcal A$ is $c$-disjoint.
In what follows we assume that $c\geq 2$.
Let $v\in V$ be a vertex, $e\in E$ be an edge incident to two vertices $w,w'\in V$ , $\nu \subset V$ be a subset of $c-1$ vertices and $i\in [2]$.
First note that any qubit $(v,i)$ supports one or zero representatives from $\mathcal A$, depending on whether $v$ belongs to $V'$ or not.
Let $u\in V'$ be a vertex and $X(u)$ be a corresponding representative of $\overline{L}^G_c$.
If $X(u)$ is supported on the qubit $(v,\nu)$, then $u \in (\nu\cup \{v\})\cap V'$.
Thus, the qubit $(v,\nu)$ supports at most $|\nu| + 1 = c$ representatives from $\mathcal A$.
Similarly, $X(u)$ is supported on qubit $(e,\nu)$, then $u \in (\nu\cup \{w,w'\})\cap V'$.
Since $V'$ is an independent set, at most one of $w$ and $w'$ can belong to $V'$.
We then conclude that qubit $(e,\nu)$ supports at most $|\nu|+1 = c$ representatives from $\mathcal A$.
Thus, $\mathcal A$ is $c$-disjont.

Now, assume that $|V'| (c+1) \equiv 1 \mod 2$.
Then, $X(V')$ is a representative of $\overline{L}^G_c$.
It remains to show that no qubit supports more than $c$ representatives from $\mathcal A \cup \{V'\}$.
Since $\mathcal A$ is $c$-disjoint, thus a potential problem may only arise for qubits which already support $c$ representatives from $\mathcal A$.
This, however, cannot happen for any qubit $(v,i)$, as $c\geq 2$.
If the qubit $(v,\nu)$ supports $c$ representatives from $\mathcal A$, then $\nu \cup \{v\} \subseteq V'$, and subsequently the qubit $(v,\nu)$ does not support $X(V')$, as $|\nu \cup \{v\}|\equiv 0 \mod 2$.
Similarly, if the qubit $(e,\nu)$ supports $c$ representatives from $\mathcal A$, then it does not support $X(V')$.
We conclude that $\mathcal A \cup \{V'\}$ is $c$-disjoint.
\end{proof}

\begin{lemma}
\label{lemma_independent}
Let $G = (V,E)$ be a graph and $c$ be a positive integer.
Let $\mathcal A$ be a collection of subsets of the vertices of $G$ that is $c$-disjoint and satisfies $|\mathcal A|\geq 9c^3/2$.
Then, there exists an independent collection $\mathcal A' \subseteq \mathcal A$, such that
$|\mathcal A'| \geq |\mathcal A| - 3c(c-1)$.
\end{lemma}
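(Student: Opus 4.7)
My plan is to construct $\mathcal A'$ by removing a small ``defect set'' $\mathcal A_{\text{def}} \subseteq \mathcal A$ of size at most $3c(c-1)$. The key tools are the combinatorial constraints encoded by $c$-disjointness: analyzing which qubits of $\mathcal S^G_c$ support $X(A)$, one sees that for each vertex $v$ the set $S(v) := \{A \in \mathcal A : v \in A\}$ satisfies $|S(v)| \leq c$ (from the vertex qubits at $v$), and for each edge $e = \{w,w'\}$ the set $T(e) := \{A \in \mathcal A : |A \cap e| = 1\}$ satisfies $|T(e)| \leq c$ (from the edge qubits on $e$).

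I would first classify the pairwise violations preventing $\mathcal A$ from being independent into \emph{overlap conflicts} (distinct $A, A'$ sharing a vertex, so they lie in some $S(v)$ with $|S(v)|\geq 2$) and \emph{adjacency conflicts} (disjoint $A, A'$ with some $v \in A$, $v' \in A'$ and $\{v, v'\} \in E$, so they contribute to some $T(e)$ with $|T(e)|\geq 2$). Motivated by the decomposition $3c(c-1) = 3\cdot c(c-1)$, I would try to split $\mathcal A_{\text{def}}$ into three pieces each of size at most $c(c-1)$: the ``big'' elements with $|A| \geq 3$, the singletons participating in overlap conflicts, and the singletons participating in adjacency conflicts.

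For the singleton defects a saturation argument should work: at each saturated vertex $v$ (i.e., with $|S(v)|\geq 2$), keep one element of $S(v)$ and discard the at most $c-1$ others, with a parallel argument for adjacency singletons at saturated edges. For the big elements, I would aim for a double-counting argument exploiting $|S(v)| \leq c$ together with the odd-parity constraint on $|A|$ (since each representative of $\overline L^G_c$ has odd size, a ``big'' element actually satisfies $|A| \geq 3$, so each such element consumes many vertex-slots).

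The main obstacle, as I see it, is bounding $|\mathcal A_{\text{big}}|$ in terms of $c$ alone---without any dependence on $|V|$---and then tying the three contributions together tightly enough to recover exactly $3c(c-1)$. The hypothesis $|\mathcal A| \geq 9c^3/2$ must enter crucially, most plausibly through a pigeonhole argument: since each vertex hosts at most $c$ elements, the footprint $U := \bigcup_{A \in \mathcal A} A$ satisfies $|U| \geq |\mathcal A|/c \geq 9c^2/2$, forcing $\mathcal A$ to spread over many vertices. I expect this geometric spread to limit how many vertices and edges can be simultaneously saturated, thereby bounding the total defect and reconciling the additive (rather than multiplicative) nature of the claim.
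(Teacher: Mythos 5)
There is a genuine gap, and you have correctly identified where it is: the entire argument hinges on bounding the defect set by $3c(c-1)$ \emph{independently of $|V|$}, and your sketch does not supply that bound. More importantly, the approach as outlined cannot be completed, because the only consequences of $c$-disjointness you extract are the two local constraints $|S(v)|\leq c$ and $|T(e)|\leq c$, and these are strictly weaker than $c$-disjointness. To see that they cannot imply the lemma, take $c=2$ and let $\mathcal A=\{\{u_1\},\dots,\{u_m\}\}$ where $u_1,\dots,u_m$ form a clique: then $|S(v)|=1$ and $|T(e)|=2$ for every vertex and edge, yet the largest independent sub-collection has size $1$, not $m-6$. (This $\mathcal A$ is of course not actually $2$-disjoint: the edge qubit $(e,\{u_k\})$ with $e=(u_i,u_j)$ and $k\notin\{i,j\}$ supports the three representatives $X(\{u_i\})$, $X(\{u_j\})$, $X(\{u_k\})$.) The information you are discarding --- the qubits $(v,\nu)$ and $(e,\nu)$ indexed by \emph{all} $(c-1)$-subsets $\nu$, on which $X(A)$ acts whenever $A$ meets $\{v\}\cup\nu$ (resp.\ $e\cup\nu$) an odd number of times --- is precisely what makes the lemma true, and it is why the code was built with ${|V|\choose c-1}$ qubits per vertex and edge in the first place.

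The paper's proof uses this structure as follows. It greedily extracts $3(c-1)$ ``pivot'' sets $L_1,\dots,L_{3(c-1)}\in\mathcal A$ together with private vertices $v_i\in L_i$, $v_i\notin L_j$ for $j\neq i$; the hypothesis $|\mathcal A|\geq 9c^3/2$ is spent entirely on keeping this greedy extraction alive for $3(c-1)$ rounds (each round discards at most $(i+1)c$ candidates, summing to roughly $9c^3/2$), not on a footprint/pigeonhole bound as you conjecture. It then sets $\mathcal A'=\{L\in\mathcal A : v_i\notin L\ \forall i\}$, which loses at most $3(c-1)\cdot c$ elements by your constraint $|S(v_i)|\leq c$. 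Finally, independence of $\mathcal A'$ is proved by contradiction: given a conflicting pair $L,L'\in\mathcal A'$ (sharing a vertex $v$, or adjacent via an edge $e$), one selects $c-1$ pivots whose private vertices avoid the conflict site and observes that the single qubit $(v,\nu)$ or $(e,\nu)$, with $\nu$ the chosen private vertices, supports $L$, $L'$ and those $c-1$ pivots, i.e.\ $c+1$ representatives. No decomposition into ``big'' versus ``singleton'' defects is needed; if you want to salvage your plan, the step you must add is a mechanism, like these $\nu$-indexed qubits, that converts \emph{any} residual conflict plus a supply of spread-out pivots into a violation of $c$-disjointness on a single qubit.
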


\begin{proof}
If $c=1$, then one can straightforwardly show that $\mathcal A$ is an independent collection and, subsequently, one can set $\mathcal A' = \mathcal A$.
In what follows we assume that $c\geq 2$ and provide an explicit construction of $\mathcal A'$.

We start by finding the subsets of vertices $L_1,\ldots,L_{3(c-1)} \in \mathcal A$ and vertices $v_1,\ldots,v_{3(c-1)}$, such that $v_i \in L_i$ and $v_i \not\in L_j$ for any two different $i,j\in [3(c-1)]$.
Let $\mathcal A_1 = \mathcal A$.
For $i = 1, \ldots, 3(c-1)$ one proceeds inductively to find $L_i$ and $v_i$.
Namely,
\begin{enumerate}
\item choose $L_i$ to be the smallest subset of vertices in $\mathcal A_i$, i.e.,
$L_i = \argmin_{L \in \mathcal A_i} |L|$,
\item choose $v_i$ to be any vertex in $L_i\setminus\bigcup_{j=1}^{i-1} L_i$,
\item define $\mathcal A'_i$ to be the collection of the elements of $\mathcal A_i$ that do not contain $v_i$, i.e., $\mathcal A'_i = \{ L \in \mathcal A_i \mathrel{|} L \not\ni v_i\}$,
\item define $\mathcal A_{i+1}$ be the collection of the elements of $\mathcal A'_i$ that are not fully contained within $\bigcup_{j=1}^i L_j$, i.e.,
$\mathcal A_{i+1} = \{ L \in \mathcal A'_i \mathrel{|} L \setminus \bigcup_{j=1}^i L_j \neq \emptyset \}$.
\end{enumerate}
It suffices to show that $|\mathcal A_i| \geq 1$ for any $i\in[3(c-1)]$, as this would imply that $L_i$ and $v_i$ can be found in steps 1 and 2.

By construction, we have the following chain of inclusions
$\mathcal A_i \supseteq\mathcal A'_i \supseteq\mathcal A_{i+1}$.
We also have
\begin{equation}
\label{eq_chain_ineqs}
|\mathcal A_{i+1}| \geq |\mathcal A'_i| - ic \geq |\mathcal A_i| - (i+1)c.
\end{equation}
To show Eq.~\eqref{eq_chain_ineqs}, first note that for every $L\in\mathcal A$ containing the vertex $v_i$, the corresponding operator $X(L)$ is supported on qubit $(v_i, 1)$.
Since $\mathcal A$ is $c$-disjoint, there are at most $c$ elements of $\mathcal A$ that include $v_i$.
Thus, we obtain $|\mathcal A'_i| \geq |\mathcal A_i| - c$.
By construction, we have $|L_1|\leq \ldots\leq |L_i|$, which leads to
\begin{equation}
\bigg|\bigcup_{j=1}^{i} L_{j}\bigg| \leq \sum_{j=1}^{i} |L_j| \leq i|L_{i}|.
\end{equation}
This, in turn, allows us to upper bound the number of elements of $\mathcal A'_i$ that are contained in $\bigcup_{j=1}^i L_i$.
By definition of $L_i$ in step 1, for any $L\in \mathcal A'_i$ we have $|L| \geq |L_i|$.
Since $\mathcal A$ is $c$-disjoint, so is $\mathcal A'_i$ and we conclude that there are at most
\begin{equation}
c \bigg|\bigcup_{j=1}^{i} L_{j}\bigg|/|L_{i}| \leq ic
\end{equation}
elements of $\mathcal A'_i$ that are contained in $\bigcup_{j=1}^i L_i$.
This leads to $|\mathcal A_{i+1}| \geq |\mathcal A'_i| - ic$, which, in turn, establishes Eq.~\eqref{eq_chain_ineqs}.
By recursively using Eq.~\eqref{eq_chain_ineqs} we obtain
\begin{eqnarray}
|\mathcal A_i| &\geq& |\mathcal A_1| - (i(i+1)/2 - 1)c \\
&\geq& 9c^3/2 -(9 c^2 - 15 c + 4)c/2 > 1
\end{eqnarray}
for any $i\in[3(c-1)]$, where the last inequality holds for any positive integer $c$.

Now, let us consider 
\begin{equation}
\mathcal A' = \{ L\in\mathcal A \mathrel{|} \forall i\in [3(c-1)]: v_i \not\in L\}.
\end{equation}
Since $\mathcal A$ is $c$-disjoint, we obtain $|\mathcal A'| \geq |\mathcal A| - 3c(c-1)$.
To show that $\mathcal A'$ is an independent collection, we assume the contrapositive.
Thus, there exist $L,L' \in \mathcal A'$, such that
either (i) $L$ and $L'$ overlap and $v\in L \cap L'$ is the shared vertex,
or (ii) $L$ and $L'$ are disjoint, and $u\in L$ and $w\in L'$ are two vertices incident to the same edge $e\in E$.
Let us assume that the proposition (i) holds and
\begin{equation}
\mu = \{ i\in [3(c -1)] \mathrel{|} v\in L_i \}.
\end{equation}
Note that $|\mu| \leq c-2$; otherwise, qubit $(v,1)$ would support at least $c+1$ representatives corresponding to $L$, $L'$ and $L_i$ for $i\in \mu$.
Thus, we can find a subset of indices $\nu \subset [3(c-1)] \setminus \mu$ of cardinality $c-1$, such that $v\not\in L_i$ for $i\in \nu$.
This leads to a contradiction, as the qubit $(v,\nu)$ supports $c+1$ representatives corresponding to $L$, $L'$ and $L_i$ for $i\in \nu$.
Now, let us assume that the proposition (ii) holds and
\begin{equation}
\mu = \{ i\in [3(c - 1)] \mathrel{|} \{u,w\} \cap L_i \neq \emptyset \}.
\end{equation}
Note that $|\mu| \leq 2c-4$; otherwise, one of the qubits $(u,1)$ and $(w,1)$ would support at least $c+1$ representatives among the ones that correspond to $L$, $L'$ and $L_i$ for $i\in \mu$.
Thus, we can find a subset of $c-1$ indices $\nu \subset [3(c-1)] \setminus \mu$, such that $u,w\not\in L_i$  for $i\in \nu$.
This leads to a contradiction, as the qubit $(e,\nu)$ supports $c+1$ representatives corresponding to $L$, $L'$ and $L_i$ for $i\in \nu$.
\end{proof}

\begin{lemma}
\label{lemma_disjointness}
Let $G = (V,E)$ be a graph with no isolated vertices, such that $\alpha(G) \geq 9c^3/2$, where $c$ is a positive integer.
Then, the $c$-disjointness for the logical operator $\overline{L}^G_c$ of the stabilizer code $\mathcal{S}^G_c$ is given by
\begin{equation}
\label{eq_thm_disjointness}
\dis[c]{\overline{L}^G_c} = (\alpha (G)+b)/c,
\end{equation}
where $b = \alpha(G)(c+1) \mod 2$.
\end{lemma}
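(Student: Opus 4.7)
The plan is to bound the maximum size $c \cdot \dis[c]{\overline L^G_c}$ of a $c$-disjoint multiset of representatives of $\overline L^G_c$ from both sides by $\alpha(G) + b$, producing the stated equality.

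For the lower bound, I take a maximum independent set $V' \subseteq V$ of size $\alpha(G)$ and form $\mathcal A_0 = \{\{v\} \mathrel{|} v \in V'\}$. Each singleton has odd cardinality, so every $X(v)$ is a genuine representative of $\overline L^G_c$, and Lemma~\ref{lemma_disjoint} immediately certifies that $\mathcal A_0$ is $c$-disjoint. When $b = 1$, i.e.\ $\alpha(G)(c+1) \equiv 1 \pmod 2$, the same lemma allows us to enlarge the collection to $\mathcal A_0 \cup \{V'\}$ while remaining $c$-disjoint, and the parity condition on $|V'|$ guarantees that $X(V')$ is itself a logical representative rather than a stabilizer. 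Together this gives $c \dis[c]{\overline L^G_c} \geq \alpha(G) + b$.

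For the upper bound, let $\mathcal A$ be any $c$-disjoint collection of representatives of $\overline L^G_c$. If $|\mathcal A| < 9c^3/2 \leq \alpha(G) \leq \alpha(G) + b$, the claim is immediate. Otherwise I invoke Lemma~\ref{lemma_independent} to extract an independent collection $\mathcal A' \subseteq \mathcal A$ with $|\mathcal A'| \geq |\mathcal A| - 3c(c-1)$. Choosing one vertex $v_A$ from each $A \in \mathcal A'$ yields $|\mathcal A'|$ distinct vertices (pairwise disjointness of the elements of $\mathcal A'$) with no edges among them (the non-adjacency clause in the definition of an independent collection), hence an independent set of $G$ of the same size. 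Thus $|\mathcal A'| \leq \alpha(G)$ and $|\mathcal A| \leq \alpha(G) + 3c(c-1)$, giving the desired shape of the bound modulo an additive $O(c^2)$ slack.

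The main obstacle is removing this $3c(c-1)$ slack to reach the exact constant $b$. I would tackle it by going back inside the proof of Lemma~\ref{lemma_independent} and reanalysing the role of the $3(c-1)$ subsets $L_1,\ldots,L_{3(c-1)}$ that are discarded: rather than writing them off, I would use the odd-cardinality requirement on every representative $A$ (so that $X(A)$ is a logical and not a stabilizer) together with a finer accounting at the $(v,\nu)$ and $(e,\nu)$ qubits to argue that, in a size-maximum $\mathcal A$, essentially all elements must already be singletons attached to a common independent set, with at most one extra ``large'' element contributing the $+b$ correction. This parity-and-packing refinement is where the real work concentrates; the surrounding argument is a direct combination of Lemmas~\ref{lemma_disjoint} and~\ref{lemma_independent}.
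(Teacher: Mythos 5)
Your lower bound is complete and matches the paper's: Lemma~\ref{lemma_disjoint} applied to a maximum independent set gives $c\dis[c]{\overline L^G_c}\geq\alpha(G)+b$. The problem is the upper bound. What you actually prove is $|\mathcal A|\leq\alpha(G)+3c(c-1)$, and you then explicitly defer the reduction of the additive slack from $3c(c-1)$ to $b\in\{0,1\}$ to a future ``parity-and-packing refinement.'' That refinement is not a finishing touch; it is the entire content of the lemma beyond what Lemmas~\ref{lemma_disjoint} and~\ref{lemma_independent} already give, and your sketch of it points in a direction that does not work. In particular, it is not true that in a maximum $c$-disjoint collection ``essentially all elements must already be singletons'': nothing forces the members of the independent collection $\mathcal A'$ to be singletons, and re-opening Lemma~\ref{lemma_independent} to track the $3(c-1)$ discarded sets more carefully will not by itself yield the exact constant.

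The argument the paper uses is different in structure. Take $\mathcal A$ maximum and let $\mathcal A'\subseteq\mathcal A$ be a \emph{largest} independent collection inside it (so $|\mathcal A'|\geq|\mathcal A|-3c(c-1)\geq c+1$ by Lemma~\ref{lemma_independent}). For any $L\in\mathcal A\setminus\mathcal A'$ one shows, by exhibiting explicit qubits of the form $(e,\nu)$ and $(v,\nu)$ with $\nu$ built from vertices chosen in $c+1$ distinct members of $\mathcal A'$, that $L$ must satisfy $L\supseteq\bigcup\mathcal A'$ (the hypothesis that $G$ has no isolated vertices enters here). This containment forces $c$ to be even and forces $|\mathcal A\setminus\mathcal A'|\leq 1$, so $|\mathcal A|\leq|\mathcal A'|+1\leq\alpha(G)+1$. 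A final parity analysis on the two cases $L=\bigcup\mathcal A'$ and $L\supsetneq\bigcup\mathcal A'$, using that $X(L)$ is a representative iff $|L|$ is odd, pins the answer to exactly $\alpha(G)+b$. None of this appears in your proposal, so as written the proof has a genuine gap at its central step.
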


\begin{proof}
Let $\mathcal A$ be a $c$-disjoint collection for $\overline{L}^G_c$ of largest possible size, i.e.,
$|\mathcal A | = c \dis[c]{\overline{L}^G_c}$.
Lemma~\ref{lemma_disjoint} implies that a maximum independent set $V' \subseteq V$ for $G$ leads to a $c$-disjoint collection for $\overline{L}^G_c$ of size $\alpha(G) + b$, and thus, by definition of $\mathcal A$, we have $|\mathcal A| \geq  \alpha(G) + b$.
Let $\mathcal A'\subseteq \mathcal A$ be an independent collection within $\mathcal A$ of largest possible size.
Note that $|\mathcal A'| \leq \alpha(G)$, as the size of any independent collection cannot be greater than the independence number of $G$.
Thus, we arrive at
\begin{equation}
\label{eq_chain_ineq}
|\mathcal A| \geq  \alpha(G) + b \geq \alpha(G) \geq |\mathcal A'|.
\end{equation}
Note that $|\mathcal A| = |\mathcal A'|$ can hold only if $b=0$, and in that case we recover Eq.~\eqref{eq_thm_disjointness}.
In what follows, we thus consider the case $\mathcal A \setminus \mathcal A' \neq \emptyset$ and choose $L \in \mathcal A\setminus \mathcal A'$.

By definition, $\mathcal A'$ is the largest independent collection within $\mathcal A$.
Thus, Lemma~\ref{lemma_independent} implies that
\begin{equation}
|\mathcal A'| \geq |\mathcal A| - 3c(c-1) \geq 9c^3/2 - 3c(c-1)\geq c+1,
\end{equation}
where the last inequality holds for any positive integer $c$.
Subsequently, let $L_1,\ldots,L_{c+1}\in\mathcal A'$ be $c+1$ different subsets of vertices and select vertices $v_i \in L_i$ for $i\in [c+1]$.
We can rule out the possibility that $L \cap \bigcup\mathcal A' = \emptyset$, where for brevity we write $\bigcup\mathcal A' = \bigcup_{L'\in\mathcal A'} L'$.
Namely, assume $L \cap \bigcup\mathcal A' = \emptyset$.
In that case, there must exists an edge $e\in E$ incident to $u\in L$ and $v\in\bigcup\mathcal A'$;
otherwise, $\mathcal A' \cup \{ L\} \subseteq \mathcal A$ would be an independent collection larger than $\mathcal A'$, leading to a contradiction.
Without loss of generality, let $v_c = v$.
Then, qubit $(e, \{v_i\}_{i\in[c-1]})$ supports $c+1$ different logical operators $X(L)$, $X(L_1)$,\ldots, $X(L_c)$, which is in a contradiction with $\mathcal A$ being $c$-disjoint.
We thus conclude that $L \cap \bigcup\mathcal A' \neq \emptyset$.

We can show that $L \supseteq \bigcup\mathcal A'$ by assuming the contrapositive.
Since $L \cap \bigcup\mathcal A' \neq \emptyset$, there exists a vertex $v\in\bigcup\mathcal A'\setminus L$.
Without loss of generality, let $v_c \in L \cap \bigcup\mathcal A'$ and $v_{c+1} = v$.
Then, one of the qubits $(v_c,\{v_i\}_{i\in[c-1]})$ or $(v_{c+1},\{v_i\}_{i\in[c-1]})$ is guaranteed to support $X(L)$, on top of other $c$ different operators, either $X(L_1),\ldots, X(L_c)$ or $X(L_1),\ldots,X(L_{c-1}),X(L_{c+1})$, respectively.
This is in a contradiction with $\mathcal A$ being $c$-disjoint.
We thus conclude that $L \supseteq \bigcup\mathcal A'$.

Note that $c$ has to be even; otherwise, the qubit $(v_c,\{v_i\}_{i\in[c-1]})$ would support $c+1$ different logical operators $X(L), X(L_1),\ldots, X(L_c)$, leading to a contradiction.
If $c$ is even, then $b = \alpha(G) \mod 2$ and $\alpha(G) + b$ is even.
We also obtain that $\mathcal A \setminus \mathcal A' = \{ L\}$;
otherwise, there would exist at least two subsets of vertices $L,K\in \mathcal A \setminus \mathcal A'$ satisfying $L,K\supseteq\bigcup\mathcal A'$ and the qubit $(v_{c-1},\{v_i\}_{i\in[c-1]})$ would support at least $c+1$ different operators $X(L), X(K), X(L_1),\ldots, X(L_{c-1})$, leading to a contradiction.
We thus conclude that $|\mathcal A| = |\mathcal A'| + 1$.

There are two cases to consider: (i) $L = \bigcup\mathcal A'$ and (ii) $L \setminus \bigcup\mathcal A' \neq\emptyset$.
Recall that for any subset of vertices $L'\subseteq V$ the Pauli operator $X(L')$ is a representative of $\overline{L}^G_c$ iff $|L'| \equiv 1 \mod 2$.
First, assume that the proposition (i) holds.
Then, $|\mathcal A'| \equiv 1 \mod 2$;
otherwise, 
\begin{equation}
|L| = |\bigcup \mathcal A'| = \sum_{L'\in\mathcal A'} |L'| \equiv |\mathcal A'| \equiv 0 \mod 2,
\end{equation}
leading to a contradiction with $X(L)$ being a representative of $\overline{L}^G_c$.
Using $|\mathcal A| = |\mathcal A'| + 1$, Eq.~\eqref{eq_chain_ineq} and the fact that $\alpha(G) + b$ is even, we finally recover Eq.~\eqref{eq_thm_disjointness}.
Now, assume that the proposition (ii) holds.
Then, $\mathcal A' \cup \{ L \setminus \bigcup\mathcal A' \}$ is an independent collection;
otherwise, there would exist an edge $e\in E$ incident to vertices $u\in L \setminus \bigcup\mathcal A'$ and $v\in\bigcup\mathcal A'$, we could set $v_c = v$ and the qubit $(e,\{v_i\}_{i\in [c-1]})$ would support $c+1$ different logical operators $X(L),X(L_1),\ldots,X(L_c)$, leading to a contradiction.
Since the size of any independent collection is at most $\alpha(G)$, thus $|\mathcal A'| + 1 \leq \alpha(G)$.
Using $|\mathcal A| = |\mathcal A'| + 1$ and Eq.~\eqref{eq_chain_ineq}, we conclude that $b=0$ and subsequently recover Eq.~\eqref{eq_thm_disjointness}. 
\end{proof}

\subsection{Putting things together}

Now, we are ready to establish the main result of our work, which asserts that it is NP-complete to calculate (or even approximate) the $c$-disjointness for any positive integer constant $c$.

\begin{theorem}
\label{thm_main}
For any positive integer constant $c$ the decision problem $c$-DISJOINTNESS is NP-complete.  Furthermore, for any $\epsilon > 0$ it is NP-hard to approximate the $c$-disjointness to within a multiplicative factor of $n^{(1 - \epsilon)/(c+1)}$, where $n$ is the number of physical qubits of the stabilizer code.  
\end{theorem}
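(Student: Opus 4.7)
The plan is to prove both claims via a polynomial-time many-one reduction from INDEPENDENT SET to $c$-DISJOINTNESS, using the construction of $\mathcal S^G_c$ and $\overline L^G_c$ from Section~\ref{sec_code} together with the tight identity $c\dis[c]{\overline L^G_c} = \alpha(G)+b$ supplied by Lemma~\ref{lemma_disjointness}. For membership in NP, the witness is a $c$-disjoint multiset $\mathcal A$ of representatives of $\overline L$: the constraint $\sum_{L\in\mathcal A}|\supp L|\leq cn$ forces $|\mathcal A|\leq cn$, so both the witness and its verification (a linear-algebraic check that each $L$ lies in the correct coset of $\mathcal S$, plus a direct tally of per-qubit overlaps) are polynomial in $n$.

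For NP-hardness, given an instance $(G,a)$ of INDEPENDENT SET I would first preprocess $G$ in polynomial time to ensure the hypotheses of Lemma~\ref{lemma_disjointness}: isolated vertices are removed and $a$ is decremented accordingly, and if $a$ is below the constant threshold $9c^3/2$ the instance is solved by brute force since $c$ is a constant. Otherwise I would pad $G$ with a disjoint union $K$ of $\lceil 9c^3/2\rceil$ edges so that $\alpha(G\sqcup K)=\alpha(G)+|E(K)|\geq 9c^3/2$ and $G\sqcup K$ still has no isolated vertices; when $c$ is even I would further append or omit one extra disjoint edge to lock in the parity of $\alpha$ so that the decision $\alpha(G)\geq a$ becomes equivalent to $c\dis[c]{\overline L^{G'}_c}\geq a'$ for an explicit $a'$ determined by the padding and parity offsets, while for odd $c$ the identity $c\dis[c]{\overline L^{G'}_c}=\alpha(G')$ makes the translation immediate. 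Building $\mathcal S^{G'}_c$ and $\overline L^{G'}_c$ is polynomial in $|V(G')|$ and $|E(G')|$, and Lemma~\ref{lemma_disjointness} then guarantees an exact decision, giving NP-hardness and hence NP-completeness.

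For the inapproximability claim, a direct count shows that each vertex and edge of $G'$ carries $\binom{|V(G')|}{c-1}=O(|V(G')|^{c-1})$ qubits over $O(|V(G')|^2)$ locations, so $n=O(|V(G')|^{c+1})$. By Lemma~\ref{lemma_disjointness}, an approximation of $\dis[c]{\overline L^{G'}_c}$ within a multiplicative factor $\rho$ yields an approximation of $\alpha(G')$ within $\rho(1+O(1/\alpha))$, since the additive term $b\in\{0,1\}$ is dominated by $\alpha\geq 9c^3/2$. Setting $\rho = n^{(1-\epsilon)/(c+1)} = O(|V(G')|^{1-\epsilon})$ and invoking H\aa{}stad's $|V|^{1-\epsilon'}$-inapproximability of $\alpha$~\cite{hastad1996clique,v003a006} with $\epsilon'$ slightly smaller than $\epsilon$ to absorb the padding and parity slack yields the advertised hardness. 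The main obstacle throughout is the careful bookkeeping of additive offsets and parity: the $9c^3/2$-threshold of Lemma~\ref{lemma_disjointness}, the term $b$, and the padding $K$ must each translate into matching offsets in $a'$ without inflating the approximation exponent beyond $1/(c+1)$; once these invariants are arranged, both parts of the theorem follow mechanically from the lemma and H\aa{}stad's theorem.
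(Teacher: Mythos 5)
Your overall architecture matches the paper's: NP-membership via a polynomially bounded witness checked by linear algebra over $\mathbb{F}_2$, NP-hardness by reducing INDEPENDENT SET through the code $\mathcal S^{G'}_c$ and the identity of Lemma~\ref{lemma_disjointness}, and inapproximability from the count $n = O(|V|^{c+1})$ together with H\aa{}stad's $|V|^{1-\epsilon}$-hardness for $\alpha(G)$. The membership and inapproximability parts are sound (the observation that support lower bounds force $|\mathcal A|\leq cn$ is a nice explicit touch, and the additive $b\leq 1$ is indeed harmless for multiplicative approximation).

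The genuine gap is in your parity handling for even $c$. You propose to ``append or omit one extra disjoint edge to lock in the parity of $\alpha$,'' but adding a disjoint edge shifts $\alpha(G')$ by exactly one \emph{relative to $\alpha(G)$}, whose parity you do not know and cannot compute in polynomial time (it is as hard as computing $\alpha(G)$ itself). So you cannot force $\alpha(G')$ to be even, and the term $b=\alpha(G')\bmod 2$ remains uncontrolled. This matters: for even $c$ the identity reads $c\dis[c]{\overline L^{G'}_c}=\alpha(G')+(\alpha(G')\bmod 2)$, so if your translated threshold $a'$ happens to be even and $\alpha(G')=a'-1$ is odd, then $c\dis[c]{\overline L^{G'}_c}=a'\geq a'$ while $\alpha(G')<a'$, and the reduction gives the wrong answer. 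The paper avoids this with a different graph gadget: duplicating every vertex and replacing each edge by the four induced edges gives $\alpha(G')=2\alpha(G)$, which is even unconditionally, so $b=0$ and the decision $\alpha(G)\geq a$ translates exactly to $c\dis[c]{\overline L^{G'}_c}\geq 2a$. Your scheme is repairable without abandoning the disjoint-edge padding, but the correct invariant to control is the parity of the \emph{threshold}, not of $\alpha$: since $c\dis[c]{\overline L^{G'}_c}=2\lceil\alpha(G')/2\rceil$ is always even when $c$ is even, choosing the number of padding edges so that $a'$ is odd makes $c\dis[c]{\overline L^{G'}_c}\geq a'$ equivalent to $\alpha(G')\geq a'$. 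As written, though, the step you describe would fail.
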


\begin{proof}
First note that INDEPENDENT SET is NP-complete, even if we restrict our attention to the graphs with no isolated vertices and the independence number greater or equal to $9c^3/2$, which we require in our proof.  
We can show that $c$-DISJOINTNESS is NP-hard by reducing INDEPENDENT SET to it.

Let $G = (V,E)$ be a graph satisfying $\alpha(G) \geq 9c^3/2$.
We construct a new graph $G' =(V',E')$ as follows:
for every vertex $v\in V$ we introduce two vertices $v_1, v_2\in V'$, and for every edge $(u,v)\in E$ we introduce four edges $(u_1,v_1),(u_1,v_2),(u_2,v_1),(u_2,v_2)\in E'$.
One can show that
\begin{equation}
\alpha(G') = 2\alpha(G).
\end{equation}
Following Section~\ref{sec_code}, we construct the stabilizer code $\mathcal{S}^{G'}_c$ and the logical operator $\overline{L}^{G'}_c$.
Since $\alpha(G') \equiv 0 \mod 2$, then from Lemma~\ref{lemma_disjointness} we obtain
\begin{equation}
c\dis[c]{\overline{L}^{G'}_c} = \alpha(G') = 2\alpha(G)
\end{equation}
and, subsequently, $\alpha(G) \geq a$ iff $c\dis[c]{\overline{L}^{G'}_c} \geq 2a$.
We conclude that the output of the $c$-DISJOINTNESS problem for the instance $\mathcal{S}^{G'}_c$, $\overline{L}^{G'}_c$ and $2a$ is the output of INDEPENDENT SET for $G$ and $a$, which, in turn, establishes that $c$-DISJOINTNESS is NP-hard.

Now we turn to approximation algorithms.
Given the graph $G = (V,E)$, the number of physical qubits in the stabilizer code $\mathcal S^G_c$ is
\begin{equation}
\textstyle
n \leq |V| \left( {|V|\choose c-1} +2 \right) + |E| {|V|\choose c-1} < |V|^{c+1},
\end{equation}
where we use $|E| \leq {|V| \choose 2}$.
Depending on the parity of $\alpha(G)(c+1)$, the size of the largest $c$-disjoint collection for the logical operator $\overline{L}^G_c$ is equal to either $\alpha(G)$ or $\alpha(G)+1$.
Thus, for any $\epsilon>0$ the ability to approximate the $c$-disjointness $\dis[c]{\overline{L}_c^G}$ to within a multiplicative factor of $n^{(1 - \epsilon)/(c+1)}$ implies that we can approximate $\alpha(G)$ to within a multiplicative factor of
$n^{(1 - \epsilon)/(c+1)} < |V|^{1 - \epsilon}$.
Because approximating $\alpha(G)$ is NP-hard, thus approximating the $c$-disjointness $\dis[c]{\overline{L}_c^G}$ is also NP-hard.

Finally, we show that $c$-DISJOINTNESS is in NP.
Let us consider a polynomial time verifier that takes as its witness a collection $\mathcal{A}$, which supposedly comprises representatives of the logical operator $\overline{L}$ specified by its representative $L\in\overline L$.
The verifier needs to check the following three conditions:
(i) the size of $\mathcal A$ is greater than $ca$
(ii) every qubit supports at most $c$ operators from $\mathcal{A}$,
and (iii) every element of $\mathcal{A}$ is a representative of $\overline{L}$.
Conditions (i) and (ii) can be easily checked in time polynomial in $n$.
Also, condition (iii) can be stated equivalently as follows:
for every $P \in \mathcal A$ the operator $PL$ belongs to the stabilizer group $\mathcal S$ specifying the stabilizer code.
Let $M$ be the binary matrix of size $(n - k) \times 2n$ that describes $\mathcal S$.
In order to check the last condition, for every $P \in \mathcal{A}$ the verifier appends to $M$ a row corresponding to a string of $2n$ bits representing $PL$ and computes its rank over the field $\mathbb{F}_2$.
The rank is equal to $n-k$ iff $P$ is a representative of $\overline{L}$.
Since finding the rank of a binary matrix can be done in time polynomial in its size, thus the witness can be verified in time polynomial in $n$, and, subsequently, $c$-DISJOINTNESS is in NP.
\end{proof}

\section{Disjointness in Practice}
\label{sec_practice}

In this section we express the problem of calculating the disjointness as a linear program with exponentially many variables, which allows us to find the disjointness in exponential time.
We use this linear program to find the disjointness of the $[\![14,3,3]\!]$ stabilizer code, and ultimately rule out
the existence of a transversal logical non-Clifford gate.

\subsection{A linear program for disjointness}

Let $c$ be a positive integer and $\overline{L}$ be a non-trivial logical Pauli operator for an $[\![n,k,d]\!]$ stabilizer code.
We can formulate an optimization problem whose optimal value is the $c$-disjointness $\dis[c]{\overline{L}}$.
Namely, for every representative $L$ of $\overline L$ we introduce a variable $x_L$, which admits non-negative integer values and represents the number of times the representative $L$ appears in a collection $\mathcal A \subseteq \overline L$.
We want to maximize the size of $\mathcal A$ divided by $c$, i.e., $\sum_{L\in \overline L} x_L/c$, subject to $\mathcal A$ being $c$-disjoint.
Note that this constraint can be equivalently phrased as follows: for every qubit $i\in [n]$ we have
$\sum_{L\in\overline L: \supp L \ni i} x_L/c \leq 1$.
Thus, we conclude that $\dis[c]{\overline{L}}$ can be found as a solution to an integer linear program with $2^{n-k}$ variables.

We can relax this integer linear program by removing the integrality constraint of each variable and allowing $x'_L = x_L/c$ to be a non-negative real number for every $L\in\overline L$.
We subsequently arrive at the following linear program
\begin{align} 
\label{eq:lp1}
&\text{maximize} && \Delta = \sum_{L \in \overline L} x'_L,\\
\label{eq:lp2}
&\text{subject to} &&\forall  i \in [n]: \sum_{L\in\overline L: \supp L \ni i} x'_L \leq 1,\\
\label{eq:lp3}
& && \forall L \in \overline L: x'_L \geq 0.
\end{align}

We now state the following theorem.

\begin{theorem}
\label{theorem_finite_c}
For any $[\![n,k,d]\!]$ stabilizer code and non-trivial logical Pauli operator $\overline{L}\in \mathcal L$,
the optimal value $\Delta^*$ attained by the linear program in Eqs.~\eqref{eq:lp1}-\eqref{eq:lp3}
is equal to the supremum of the $c$-disjointness for $\overline L$, i.e.,
\begin{equation}
\label{eq_optimal_value}
\Delta^* = \sup_{c\in \mathbb Z_{+}} \dis[c]{\overline L}.
\end{equation}
Moreover, the supremum of the $c$-disjointness for $\overline L$ is attained at some positive integer $c^* = 2^{\poly(n2^{n-k})}$, i.e.,
\begin{equation}
\label{eq_finite_supremum}
\sup_{c\in \mathbb Z_{+}} \dis[c]{\overline L} = \dis[c^*]{\overline L}.
\end{equation}
\end{theorem}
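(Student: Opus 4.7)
The plan is to interpret the $c$-disjointness problem as the integer analogue of the LP in Eqs.~\eqref{eq:lp1}--\eqref{eq:lp3}, where $x'_L$ is restricted to $\tfrac{1}{c}\mathbb{Z}_{\geq 0}$, and then use classical polyhedral theory to match the continuous optimum with the supremum over $c$. Since $\overline L$ is a coset of $\mathcal S$ in the Pauli group, the LP has exactly $2^{n-k}$ variables indexed by the representatives of $\overline L$ and $n$ nontrivial constraints. It is feasible ($x'_L \equiv 0$) and bounded (summing the qubit constraints gives $\sum_L |\supp L|\,x'_L \leq n$, so $\sum_L x'_L \leq n$), hence an optimum is attained at a vertex of the feasible polytope.

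To establish Eq.~\eqref{eq_optimal_value} I would prove two inequalities. The easy direction $\Delta^* \geq \sup_c \dis[c]{\overline L}$ follows by taking a $c$-disjoint multiset $\mathcal A$ of maximum size $c\dis[c]{\overline L}$ and defining $x'_L$ to be the multiplicity of $L$ in $\mathcal A$ divided by $c$; this is an LP-feasible point with objective value $\dis[c]{\overline L}$. For the reverse inequality, I would take any vertex optimum $x'$ of the LP; because both the constraint matrix and the right-hand side are rational (in fact $\{0,1\}$-valued), each coordinate can be written as $x'_L = p_L/q_L$ in lowest terms. Setting $c$ to be the least common multiple of the $q_L$, the vector $c x'$ has non-negative integer entries, so the multiset $\mathcal A$ containing $cx'_L$ copies of $L$ is $c$-disjoint and satisfies $|\mathcal A| = c\Delta^*$. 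Therefore $\dis[c]{\overline L} \geq \Delta^*$, which combined with the easy direction also shows that the supremum in Eq.~\eqref{eq_finite_supremum} is attained (not merely approached) at this finite $c$.

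The main technical step is bounding the size of $c^*$, for which I would apply Cramer's rule. Any vertex of the LP has at most $n'\leq n$ nonzero coordinates, whose values solve a square linear system $M\hat x = \mathbf{1}$, where $M$ is an $n'\times n'$ submatrix of the $\{0,1\}$-valued constraint matrix. By Cramer's rule each denominator divides $\det M$, and Hadamard's inequality gives $|\det M|\leq (n')^{n'/2}\leq n^{n/2}$; the LCM of at most $n$ such integers is then bounded by $n^{n^2/2}=2^{O(n^2\log n)}$, comfortably within the claimed $2^{\poly(n2^{n-k})}$. The only subtlety to watch is that the LP has $2^{n-k}$ variables yet the vertex has only $n'\leq n$ nonzero ones, which is what allows $c^*$ to be controlled in terms of $n$ and $2^{n-k}$ alone; beyond this bookkeeping, every step reduces to standard LP facts and no new idea is required.
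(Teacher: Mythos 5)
Your proposal is correct and follows the same two-inequality structure as the paper's proof: the easy direction by converting a maximal $c$-disjoint multiset into an LP-feasible point, and the reverse direction by clearing denominators of a rational optimal LP solution via their least common multiple. The only substantive difference is how you certify that the denominators are controlled. The paper invokes the standard black-box fact that an LP with rational data admits an optimal solution whose bit size is polynomial in the bit size of the data (citing Dantzig), which over the $2^{n-k}\times n$ constraint matrix yields $c^* \leq 2^{\poly(n2^{n-k})}$. You instead argue directly from polyhedral geometry: a vertex of the feasible polytope has at most $n$ nonzero coordinates (since only $n$ non-sign constraints exist), those coordinates solve a square $\{0,1\}$ system, and Cramer plus Hadamard bound every denominator by $n^{n/2}$. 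This is more self-contained and gives a sharper bound on $c^*$ of order $2^{O(n^2\log n)}$ --- in fact, since all nonzero coordinates of a given vertex share the common denominator $\det M$, their LCM already divides $\det M$ and your $n^{n^2/2}$ can be tightened to $n^{n/2}$. Your boundedness argument ($\sum_L x'_L \leq n$ because every representative of a non-trivial logical operator has non-empty support) is also fine and matches the paper's observation that each variable appears in at least one constraint. No gaps.
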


\begin{proof}
Let $c$ be any positive integer and $\mathcal A$ be a $c$-disjoint collection for $\overline L$ of the largest possible size, i.e., $|\mathcal A| = c \dis[c]{\overline L}$.
If $x_L$ is the number of times a representative $L\in\overline L$ appears in $\mathcal A$, then for every qubit $i\in [n]$ we have $\sum_{L\in\overline L: \supp L \ni i} x_L/c \leq 1$, as $\mathcal A$ is $c$-disjoint.
Moreover,
\begin{equation}
\Delta^* \geq \sum_{L\in\overline L} x_L/c = \dis[c]{\overline L},
\end{equation}
leading to
\begin{equation}
\label{eq_optimal1}
\Delta^* \geq \sup_{c\in\mathbb Z_+} \dis[c]{\overline L}.
\end{equation}

Note that there exists some finite optimal solution to the linear program in Eqs.~\eqref{eq:lp1}-\eqref{eq:lp3}, since every variable $x'_L$ appears in at least one constraint and, subsequently, every feasible solution is bounded, i.e., $x'_L \leq 1$ for all $L\in\overline L$.
Moreover, we can specify the linear program in Eqs.~\eqref{eq:lp1}-\eqref{eq:lp3} in matrix form 
$\max\{{\bf c^\T x \mathrel{|} x}\in\mathbb{R}^n, \bf Ax = b, x \geq 0\}$, where $\bf A$ is a binary matrix of size $2^{n - k} \times n$ that captures the support of every representative, and $\bf b$ and $\bf c$ are vectors of all ones of length $n$ and $2^{n-k}$, respectively.
The existence of a finite optimal solution and the fact that $\bf A$, $\bf b$ and $\bf c$ have rational entries implies that there exists a rational optimal solution $\{x_L^*\}_{L\in\overline L}$ whose bit size is polynomially bounded in terms of the bit sizes of $\bf A$, $\bf b$, and $\bf c$ \cite{dantzig1951maximization}.
Then, for every $L\in \overline L$ we can find positive integers $a_L^*$ and $b_L^*$, such that $x_L^* = a_L^* / b_L^*$ and $a_L^* b_L^* \leq 2^{\poly(n2^{n-k})}$.
Let $c^*$ be the least common multiple of all $b_L^*$s, i.e.,
\begin{equation}
c^* = \lcm(\{b^*_L\}_{L\in\overline L}).
\end{equation}
Clearly, $c^* < (2^{\poly(n2^{n-k})})^{2^{n-k}} = 2^{\poly(n2^{n-k})}$.
Let $\mathcal A^*$ be a collection of representatives of $\overline L$ obtained by taking $x^*_L c^*$ copies of each $L\in\overline L$.
Since $\mathcal A^*$ is $c^*$-disjoint and $|\mathcal A^*| = \sum_{L\in\overline L} x_L^*c^*$, we obtain that
\begin{equation}
\label{eq_optimal2}
\Delta^* = \sum_{L\in\overline L} x_L^* = |\mathcal A^*|/c^*
\leq \dis[c^*]{\overline L}
\leq \sup_{c\in\mathbb Z_+} \dis[c^*]{\overline L}.
\end{equation}
Thus, the inequalities in Eqs.~\eqref{eq_optimal1}~and~\eqref{eq_optimal2} have to become equalities, which implies Eqs.~\eqref{eq_optimal_value}-\eqref{eq_finite_supremum}.
\end{proof}

Theorem~\ref{theorem_finite_c} implies the following proposition.

\begin{proposition}
\label{proposition_disjointness_equiv}
The disjointness of any stabilizer code $\mathcal S$ satisfies the following equality
\begin{equation}
\label{eq_disjointness_equiv}
\dis {\mathcal S} \\=  \min_{\overline L \in \mathcal L} \sup_{c\in\mathbb Z_+} \dis[c]{\overline{L}}.
\end{equation}
\end{proposition}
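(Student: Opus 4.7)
The plan is to exchange the order of $\sup_c$ and $\min_{\overline L}$ in the definition of $\Delta(\mathcal S)$. One direction, namely
\begin{equation}
\sup_{c\in\mathbb Z_+} \min_{\overline L\in\mathcal L} \Delta_c(\overline L) \;\leq\; \min_{\overline L\in\mathcal L}\sup_{c\in\mathbb Z_+} \Delta_c(\overline L),
\end{equation}
is the standard max-min inequality: for any fixed positive integer $c$ and any fixed $\overline L_0\in\mathcal L$, $\min_{\overline L}\Delta_c(\overline L) \leq \Delta_c(\overline L_0) \leq \sup_{c'}\Delta_{c'}(\overline L_0)$, and taking $\sup_c$ on the left and $\min_{\overline L_0}$ on the right yields the desired bound.

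For the reverse inequality, the key observation is that $\mathcal L$ is finite (of cardinality at most $4^k-1$), and Theorem~\ref{theorem_finite_c} guarantees that for each $\overline L\in\mathcal L$ there exists a finite positive integer $c^*(\overline L)$ at which $\sup_c \Delta_c(\overline L) = \Delta_{c^*(\overline L)}(\overline L)$. I would then define $C = \lcm\{c^*(\overline L) : \overline L\in\mathcal L\}$, a well-defined finite positive integer.

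The remaining ingredient is a trivial monotonicity statement: for any positive integers $c$ and $k$ and any $\overline L\in\mathcal L$, $\Delta_{kc}(\overline L) \geq \Delta_c(\overline L)$. To prove it, take a $c$-disjoint collection $\mathcal A$ of size $c\Delta_c(\overline L)$ and form the multiset $\mathcal A^{(k)}$ containing $k$ copies of each element; since $\mathcal A$ is allowed to be a multiset in the definition of $c$-disjointness, $\mathcal A^{(k)}$ is a legitimate $kc$-disjoint collection of size $kc\Delta_c(\overline L)$. Applying this with $c=c^*(\overline L)$ and $k=C/c^*(\overline L)$ gives $\Delta_C(\overline L) \geq \sup_c \Delta_c(\overline L)$; the reverse inequality is immediate, so $\Delta_C(\overline L) = \sup_c \Delta_c(\overline L)$ for every $\overline L\in\mathcal L$. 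Therefore
\begin{equation}
\sup_{c\in\mathbb Z_+}\min_{\overline L\in\mathcal L}\Delta_c(\overline L) \;\geq\; \min_{\overline L\in\mathcal L}\Delta_C(\overline L) \;=\; \min_{\overline L\in\mathcal L}\sup_{c\in\mathbb Z_+}\Delta_c(\overline L),
\end{equation}
closing the argument.

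There is no serious obstacle: the entire proof rests on Theorem~\ref{theorem_finite_c}, the finiteness of $\mathcal L$, and the copying trick $\mathcal A \mapsto \mathcal A^{(k)}$ that bridges different values of $c$. The only subtlety worth flagging is that the copying trick genuinely requires the multiset convention adopted in the definition of $c$-disjointness; without it, one could not align the individual suprema $\Delta_{c^*(\overline L)}(\overline L)$ at a single common $c = C$, and the minimax swap would not go through.
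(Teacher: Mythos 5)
Your proposal is correct and follows essentially the same route as the paper: one direction by the max-min inequality, the other by taking the least common multiple of the integers $c_{\overline L}$ furnished by Theorem~\ref{theorem_finite_c} and invoking the monotonicity $\dis[a]{\overline L} \leq \dis[ab]{\overline L}$ (which the paper states without proof and you justify via the copying trick). The only difference is that you make explicit the finiteness of $\mathcal L$ and the multiset convention underlying the monotonicity step, both of which the paper uses implicitly.
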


\begin{proof}
By definition of the disjointness, we have $\dis{\mathcal S} = \sup_{c \in\mathbb Z_+}\min_{\overline L\in \mathcal L} \dis[c]{\overline L}$.
Theorem~\ref{theorem_finite_c} guarantees that for any non-trivial logical Pauli operator $\overline L\in \mathcal L$ the supremum of the $c$-disjointness for $\overline L$ is attained at some positive integer $c_{\overline L}$.
Let $c^*$ be the least common multiple of  all $c_{\overline L}$s, i.e.,
\begin{equation}
c^* = \lcm\left(\{c_L\}_{\overline L \in \mathcal L}\right).
\end{equation}
We then obtain
\begin{eqnarray}
\label{eq_dis_chain1}
\min_{\overline L \in \mathcal L} \dis[c^*]{\overline L}
&\leq& \sup_{c \in\mathbb Z_+}\min_{\overline L\in \mathcal L} \dis[c]{\overline L}
\leq \min_{\overline L\in \mathcal L}\sup_{c \in\mathbb Z_+} \dis[c]{\overline L}\\
\label{eq_dis_chain2}
&=& \min_{\overline L\in \mathcal L} \dis[c_{\overline L}]{\overline L}
\leq \min_{\overline L\in \mathcal L} \dis[c^*]{\overline L},
\end{eqnarray}
where we use the max-min inequality and the fact that $\dis[a]{\overline L} \leq \dis[ab]{\overline L}$ for any $\overline L \in \mathcal L$ and $a,b\in\mathbb Z_+$.
Thus, all the above inequalities in Eqs.~\eqref{eq_dis_chain1}-\eqref{eq_dis_chain2} have to become equalities, which, in turn, implies Eq.~\eqref{eq_disjointness_equiv}.
\end{proof}

We remark that the linear program in Eqs.~\eqref{eq:lp1}-\eqref{eq:lp3}, with $2^{n - k}$ variables and $n$ constraints, can be solved in time $O(2^{2.5 (n - k)})$.
Thus, the disjointness of the stabilizer code $\mathcal S$ can be found in time $O(2^{2.5 n - 0.5 k})$, as we can find it by solving this linear program for all $2^{2k} - 1$ logical Pauli operators for $\mathcal S$.

\subsection{An illustrative example}
\label{section:example}

\begin{figure}
\centering
\input{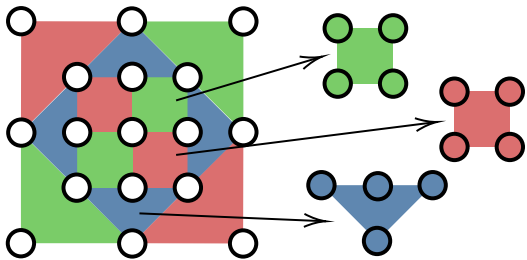}
\caption{
The $[\![14,3,3]\!]$ stabilizer code is defined by placing qubits (white dots) on the vertices of a rhombic dodecahedron and introducing $X$-, $Y$- and $Z$-type stabilizer generators for every red, green and blue face, respectively.
We depict ``the azimuthal projection'', where the four corner qubits are identified.
Dots colored in red, green and blue correspond to Pauli $X$, $Y$ and $Z$ operators, respectively.
}
\label{fig:big_boy}
\end{figure}

Now, we focus on an illustrative example of the recently introduced $[\![14,3,3]\!]$ stabilizer code~\cite{Landahl2020}; see Fig.~\ref{fig:big_boy}.
We would like to understand whether any non-Clifford logical operators for this code can be implemented via transversal gates.
If this was the case, then due to its small size the $[\![14,3,3]\!]$ stabilizer code could prove useful in, for instance, magic state distillation protocols.
Since the $[\![14,3,3]\!]$ stabilizer code is a non-CSS code, there are no off-the-shelf techniques to find transversal gates; however, we can use the disjointness to rule out the possibility of certain logical operations.

A brute-force approach to computing the $c$-disjointness, even for $c=1$ and small stabilizer codes, is infeasible.
In the case of the $[\![14,3,3]\!]$ stabilizer code, there are $4^3-1 = 63$ different non-trivial logical Pauli operators, and each of them has $2^{11} = 2048$ representatives.
Thus, for each logical Pauli operator there are $2^{2048} \approx 3.2\times 10^{616}$ possible subsets of its representatives, and we would need to check the qubit overlap for each of them (as the $1$-disjointness is achieved with a set rather than a multiset).

We can, however, use the linear program specified in Eqs.~\eqref{eq:lp1}-\eqref{eq:lp3}, which has $2048$ variables and $14$ constraints (excluding the positivity constraints).
We numerically find that the disjointness of the $[\![14,3,3]\!]$ stabilizer code is $2$; see the source code~\cite{Bostanci2020}.
We also find that the max-distance of the $[\![14,3,3]\!]$ stabilizer code is $d_\uparrow = 6$.
Thus, using the bound from Eq.~\eqref{eq_level_bound} we obtain that any transversal gate can only implement logical operations within the third level of the logical Clifford hierarchy.

\begin{figure}
\centering
\input{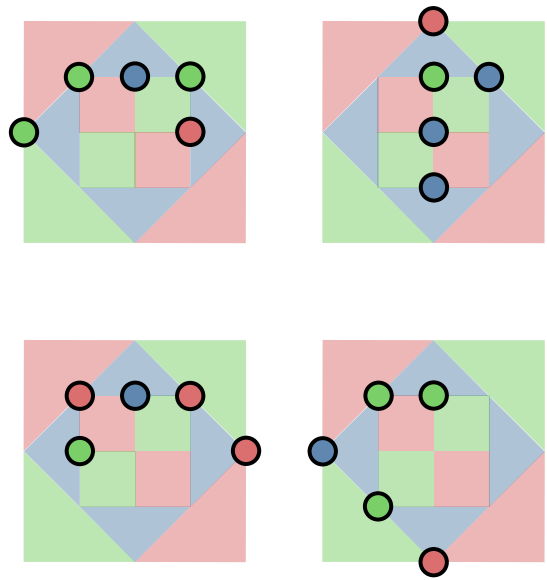}
\caption{
For the $[\![14,3,3]\!]$ stabilizer code there are four logical Pauli operators with disjointness $2$.
We depict their smallest-weight representatives, where dots colored in red, green and blue correspond to Pauli $X$, $Y$ and $Z$ operators, respectively.
}
\label{fig:small_boys}
\end{figure}

We now obtain an improvement of the bound in Eq.~\eqref{eq_level_bound}, which we subsequently use to rule out the possibility of any transversal gate that implements a non-Clifford logical operation for the $[\![14,3,3]\!]$ stabilizer code.  Recall that the main proof idea in Ref.~\cite{Jochym-OConnor2018} is to evaluate the (nested) group commutator of the transversal logical gate and $M$ logical Pauli operators for the stabilizer code $\mathcal S$.
If for every $M$-tuple of logical Pauli operators the resulting operator is a trivial logical operator, then the transversal logical gate is in the $M^{\text{th}}$ level of the logical Clifford hierarchy.
To recast this condition, it is useful to introduce the following quantity
\begin{equation}
\Omega_M(\mathcal S) = \max_{\{\overline L_i\}_{i\in [M]}} \min_{L_i \in \overline L_i}
\bigg|\bigcap_{i\in [M]} \supp L_i\bigg|,
\end{equation}
where for each $M$-tuple of logical Pauli operators $\{\overline L_i\}_{i\in [M]}$ we seek an $M$-tuple of corresponding representatives $L_i\in\overline L_i$, whose intersection $\bigcap_{i\in [M]} \supp L_i$ is the smallest.
Then, we can formulate the following strengthening of Theorem~5 from Ref.~\cite{Jochym-OConnor2018}.

\begin{theorem}
\label{thm:strengthening}
Consider a stabilizer code $\mathcal S$ with the min-distance $d_\downarrow$.
If $M$ is a positive integer satisfying
\begin{equation}
\Omega_M(\mathcal S) < d_\downarrow,
\end{equation}
then any transversal logical gate for $\mathcal S$ is in the $M^\text{th}$ level of the logical Clifford hierarchy.
\end{theorem}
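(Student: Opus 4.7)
The plan is to follow the nested-group-commutator strategy underlying Theorem~5 of Ref.~\cite{Jochym-OConnor2018}. Recall that a logical unitary $U$ lies in the $M$th level of the logical Clifford hierarchy if and only if, for every $M$-tuple of logical Pauli operators $\overline L_1, \ldots, \overline L_M \in \mathcal L$ and any representatives $L_i \in \overline L_i$, the nested group commutator
\[
C := [\cdots[[U, L_1], L_2], \cdots, L_M]
\]
acts as a scalar on the code space. Hence it suffices to show that, under the hypothesis $\Omega_M(\mathcal S) < d_\downarrow$, the operator $C$ acts trivially on the code space for every choice of $\overline L_1, \ldots, \overline L_M$.

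Given such a tuple of logical Paulis, the first step is to invoke the definition of $\Omega_M(\mathcal S)$ and choose representatives $L_i \in \overline L_i$ minimising $\bigl|\bigcap_{i\in[M]} \supp L_i\bigr|$; by hypothesis this common intersection has size at most $\Omega_M(\mathcal S) < d_\downarrow$. This substitution is legitimate because the action of $C$ on the code space is insensitive to the choice of representative: replacing $L_i$ by $L_i S$ for a stabilizer $S$ modifies $C$ only by extra factors that act as the identity on the code space, using that $U$, being a code-space-preserving unitary, conjugates stabilizers to stabilizers.

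Next I would exploit transversality. Writing $U = \bigotimes_j U_j$, the single-qubit factor of $[U, L_i]$ on a qubit $j \notin \supp L_i$ is $U_j I U_j^\dagger I = I$, so $[U, L_i]$ factorises as a tensor product of single-qubit operators supported inside $\supp L_i$. Iterating this observation inductively through the $M$-fold commutator, $C$ remains a tensor product of single-qubit operators whose support lies in $\bigcap_{i\in[M]} \supp L_i$, and therefore has weight strictly less than $d_\downarrow$.

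The final step is to invoke the standard fact that any code-space-preserving operator of weight below $d_\downarrow$ acts as a scalar on the code space. Decomposing such an operator in the Pauli basis, each summand either anticommutes with some stabilizer, in which case it vanishes when sandwiched between the code-space projectors, or commutes with all stabilizers, in which case its weight below $d_\downarrow$ forces it to lie in the stabilizer group itself. Since $C$ is manifestly a composition of code-space-preserving unitaries, this forces $C$ to act as a scalar on the code space, placing $U$ in the $M$th level of the logical Clifford hierarchy. The step I expect to require the most care is the first one: verifying that the commutator's action on the code space is genuinely representative-independent, and that conjugating stabilizers by a potentially non-Clifford transversal $U$ keeps them in the stabilizer group, both of which reduce ultimately to $U$ preserving the code-space projector.
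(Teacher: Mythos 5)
Your proof is correct and follows the same nested-group-commutator route that the paper relies on (the paper states Theorem~\ref{thm:strengthening} by recalling exactly this argument from Ref.~\cite{Jochym-OConnor2018} rather than writing out a separate proof). One small imprecision: a non-Clifford transversal $U$ need not conjugate stabilizers to stabilizers --- $USU^\dagger$ is generally not even a Pauli operator --- but it does act as the identity on the code space, which is all your representative-independence step actually requires.
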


To establish that for the $[\![14,3,3]\!]$ stabilizer code any transversal gate can only implement a logical Clifford gate, it suffices to show that $\Omega_2(\mathcal S) < 3$.
First, consider any pair of logical Pauli operators $\overline{L}_1$ and $\overline{L}_2$, for which, without loss of generality, $\dis{\overline{L}_1} > 2$.
Because the max-distance is $d_\uparrow = 6$, we can find a representative $L_2 \in \overline{L}_2$ of weight six.
By Lemma 4 from Ref.~\cite{Jochym-OConnor2018}, there exists a representative $L_1 \in \overline{L}_1$ such that $L_1$ and $L_2$ overlap on at most two qubits.
Thus, we restrict our attention to the logical Pauli operators $\overline{L}_1$ and $\overline{L}_2$, such that $\dis{\overline{L}_i} = 2$ for $i\in[2]$.
There are exactly four of these operators; see Fig.~\ref{fig:small_boys}.
By a brute-force search we verify that for all ten pairs of these operators there exists a choice of representatives that overlap on at most one qubit; see the source code~\cite{Bostanci2020}.
We finally conclude that $\Omega_2(\mathcal S) < 3$ for the $[\![14,3,3]\!]$ stabilizer code.

\section{Bounds on the disjointness}
\label{sec_bounds}

In this section, we provide bounds on the disjointness for the CSS codes, concatenated codes and hypergraph product codes, which, in turn, lead to the limitations on transversal gates available in each code family.

\subsection{CSS codes}

\begin{proposition}
\label{proposition_css}
Let $\mathcal S$ be a CSS code and $\overline L = \overline L^X \overline L^Z$ be a logical Pauli operator written in a standard logical basis.
Then, for any positive integer $c$ and $P\in \{X,Z\}$ the largest $c$-disjoint collection for $\overline L^P$ can be formed using $P$-type operators and the following inequalities hold
\begin{eqnarray}
\label{eq_css_ineq1}
\dis[c]{\overline L} &\leq& \min_{P\in \{X,Z\}}\dis[c]{\overline L^P},\\
\label{eq_css_ineq2}
\dis{\overline L} &\geq& \tfrac{\dis{\overline L^X}\dis{\overline L^Z}}{\dis{\overline L^X}+ \dis{\overline L^Z}-1},
\end{eqnarray}
where $\dis{\overline L}$, $\dis{\overline L^X}$ and $\dis{\overline L^Z}$ denote the supremum of the $c$-disjointness for $\overline L$, $\overline L^X$ and $\overline L^Z$, respectively.
\end{proposition}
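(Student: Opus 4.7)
My plan is to address the three claims in sequence. For the claim that optimal $c$-disjoint collections for $\overline L^P$ can be restricted to $P$-type representatives, I would exploit the CSS-specific factorization $\mathcal S = \mathcal S_X \cdot \mathcal S_Z$ into $X$- and $Z$-type stabilizer subgroups. Any representative of $\overline L^X$ takes the form $\overline L^X s_X s_Z$, whose $X$-type part $\overline L^X s_X$ is itself a representative with support contained in that of the original. Replacing each element of any $c$-disjoint collection by its $P$-type part therefore preserves the $c$-disjointness condition while keeping the multiset size unchanged, so the optimum is attained using $P$-type representatives only.

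For the first inequality, I would take a $c$-disjoint collection $\mathcal A$ for $\overline L$ achieving $|\mathcal A| = c \dis[c]{\overline L}$ and decompose each element as $L = L^X L^Z$, where $L^X$ and $L^Z$ are $X$-type and $Z$-type representatives of $\overline L^X$ and $\overline L^Z$, respectively. Because $\supp L = \supp L^X \cup \supp L^Z$, the multiset $\{L^X\}_{L\in\mathcal A}$ remains $c$-disjoint for $\overline L^X$ with the same cardinality, giving $\dis[c]{\overline L} \leq \dis[c]{\overline L^X}$; the symmetric argument handles $\overline L^Z$.

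For the second inequality, I would explicitly construct a near-optimal $c$-disjoint collection for $\overline L$ as a product of optimal collections for the individual logical factors. By Theorem~\ref{theorem_finite_c}, there exist positive integers $c_X, c_Z$ and collections $\mathcal A_X, \mathcal A_Z$ (consisting of $X$- and $Z$-type representatives, using the first claim) with $|\mathcal A_P|/c_P = \dis{\overline L^P}$ for $P\in\{X,Z\}$. I would then consider the product multiset $\mathcal A = \{L^X L^Z \mathrel{|} L^X \in \mathcal A_X, L^Z \in \mathcal A_Z\}$ of size $c_X c_Z \dis{\overline L^X}\dis{\overline L^Z}$. If a qubit $q$ lies in the support of $a$ elements of $\mathcal A_X$ and $b$ elements of $\mathcal A_Z$, then by inclusion--exclusion it appears in the support of exactly $a |\mathcal A_Z| + b |\mathcal A_X| - ab$ elements of $\mathcal A$.

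The main obstacle is the max analysis of this count, where the non-trivial correction $-ab$ must produce the clean factor $\dis{\overline L^X}+\dis{\overline L^Z}-1$ rather than the naive $\dis{\overline L^X}+\dis{\overline L^Z}$. Since $a\leq c_X$, $b\leq c_Z$, and $|\mathcal A_P| \geq c_P$, both partial derivatives of this bilinear expression in $(a,b)$ are non-negative on the feasible rectangle, so the maximum is attained at the corner $(c_X, c_Z)$ and equals $c_X c_Z (\dis{\overline L^X} + \dis{\overline L^Z} - 1)$. Setting $c$ to this value makes $\mathcal A$ a $c$-disjoint collection for $\overline L$, and the ratio $|\mathcal A|/c$ evaluates to $\dis{\overline L^X}\dis{\overline L^Z}/(\dis{\overline L^X} + \dis{\overline L^Z} - 1)$, yielding the lower bound after noting $\dis{\overline L} = \sup_c \dis[c]{\overline L} \geq |\mathcal A|/c$.
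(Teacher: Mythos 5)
Your proposal is correct and follows essentially the same route as the paper's proof: reduce to $P$-type representatives via the $X$/$Z$ stabilizer factorization, project an optimal collection for $\overline L$ onto its $X$- and $Z$-parts for the first inequality, and form the product multiset of optimal collections for $\overline L^X$ and $\overline L^Z$ with the inclusion--exclusion count $a|\mathcal A_Z| + b|\mathcal A_X| - ab$ for the second. Your explicit monotonicity argument locating the maximum of that bilinear count at $(c_X, c_Z)$ is a slightly more careful justification of the per-qubit bound $c' = c^X c^Z(\dis{\overline L^X}+\dis{\overline L^Z}-1)$ that the paper simply asserts.
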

\begin{proof}
Let $\mathcal A$ and $\mathcal A^P$ be the largest $c$-disjoint collections for $\overline L$ and $\overline L^P$, respectively.
Since $\overline L^P$ is a $P$-type logical Pauli operator, there exists a $P$-type operator $L^P$, which is a representative of $\overline L^P$.
Then, for every representative $L\in\mathcal A^P$ we can express it in the following way
\begin{equation}
L = L^P S^X_L S^Z_L,
\end{equation}
where $S^X_L$ and $S^Z_L$ are some $X$- and $Z$-type stabilizer generators of $\mathcal S$.
Note that $\{ L^P S^P_L \mathrel{|} L\in\mathcal A^P\}$ is an example of the desired $c$-disjoint collection for $\overline L^P$.
Similarly, for every representative $L\in\mathcal A$ we have
\begin{equation}
L = L^X S^X_L L^Z S^Z_L.
\end{equation}
Since $\{ L^P S^P_L \mathrel{|} L\in\mathcal A\}$ is a $c$-disjoint collection for $\overline L^P$, we conclude that $\dis[c] {\overline L} \leq \dis[c]{\overline L^P}$ and establish Eq.~\eqref{eq_css_ineq1}.

Theorem~\ref{theorem_finite_c} guarantees that we can find an integer $c^P$ and a $c^P$-disjoint collection $\mathcal A'^P$, such that $\left|\mathcal A'^P\right| = c^P \dis{\overline L^P}$.
Note that $\mathcal A' = \{ L L' \mathrel{|} L\in\mathcal A'^X, L'\in\mathcal A'^Z\}$ is a collection of representatives of $\overline L$ of size
\begin{equation}
|\mathcal A' | = c^X \dis{\overline L^X} c^Z \dis{\overline L^Z}.
\end{equation}
Since there are at most
\begin{eqnarray}
c' &=& c^X \left|\mathcal A'^Z\right| + c^Z \left|\mathcal A'^X\right|-c^Xc^Z\\
&=& c^X c^Z \left(\dis{\overline L^X} + \dis{\overline L^Z}-1\right)
\end{eqnarray}
elements of $\mathcal A'$ that are supported on any given qubit, thus $\mathcal A'$ is $c'$-disjoint.
This, in turn, allows us to conclude that $\dis{\overline L} \geq |\mathcal A'|/c'$ and establish Eq.~\eqref{eq_css_ineq2}.
\end{proof}

We remark that Proposition~\ref{proposition_disjointness_equiv} and Proposition~\ref{proposition_css} immediately imply that
\begin{equation}
\min_{\overline L \in \mathcal L^X \cup \mathcal L^Z} \dis{\overline L}/2
< \dis {\mathcal S}
\leq \min_{\overline L \in \mathcal L^X \cup \mathcal L^Z} \dis{\overline L},
\end{equation}
where $\mathcal L^P$ denotes the set of all non-trivial $P$-type logical operators in a standard logical basis for $P\in \{ X,Z \}$.
Thus, in order to obtain an approximation up to a multiplicative factor of two to the disjointness of any CSS code it suffices to consider the $c$-disjointness for its $X$- and $Z$-type logical operators.

\subsection{Concatenated stabilizer codes}

\begin{proposition}
\label{proposition_concatenated}
Let $\mathcal S_i$ be a stabilizer code with the min-distance $d^{(i)}_\downarrow > 1$, max-distance $d^{(i)}_\uparrow$ and disjointness $\dis {\mathcal S_i}$, where $i\in [2]$.
Then, the disjointness of the concatenated code $\mathcal S_1 \lhd \mathcal S_2$ satisfies the following inequality
\begin{equation}
\label{eq_disjointness_concat}
\dis{\mathcal S_1 \lhd \mathcal S_2} \geq \dis{\mathcal S_1} \dis{\mathcal S_2}.
\end{equation}
Moreover, if $M$ is the level of the logical Clifford hierarchy attainable by transversal logical gates for $\mathcal S_1 \lhd \mathcal S_2$, then
\begin{equation}
\label{eq_level_concat}
M\leq M_\mathrm{max} = \max_{i\in [2]} \left\lfloor\log_{\Delta^{(i)}} \left(d^{(i)}_\uparrow/d^{(i)}_\downarrow\right) \right\rfloor + 2.
\end{equation}
\end{proposition}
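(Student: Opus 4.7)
The plan is to prove the two assertions separately: the disjointness bound \eqref{eq_disjointness_concat} via an explicit product construction of a disjoint collection for the concatenated code, and the Clifford-hierarchy bound \eqref{eq_level_concat} by feeding \eqref{eq_disjointness_concat} together with elementary distance estimates into \eqref{eq_level_bound}.

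For the first part, fix any non-trivial logical Pauli $\overline L$ of $\mathcal S_1 \lhd \mathcal S_2$ and let $\overline L^{(1)}$ be the corresponding logical Pauli of $\mathcal S_1$, so that physical representatives of $\overline L$ arise from representatives of $\overline L^{(1)}$ by encoding each non-identity factor $P_i$ blockwise into $\mathcal S_2$. By Theorem~\ref{theorem_finite_c}, I would pick a finite positive integer $c_1$ and a $c_1$-disjoint collection $\mathcal A^{(1)}$ for $\overline L^{(1)}$ of size $c_1 \Delta(\overline L^{(1)})$; similarly, pick $c_2$ and, for each single-qubit Pauli $P\in\{X,Y,Z\}$, a $c_2$-disjoint collection for the logical $\overline P$ of $\mathcal S_2$, truncated to a common size $N_2 = c_2\Delta(\mathcal S_2)$ and enumerated as $\{R^P_1,\ldots,R^P_{N_2}\}$. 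For each pair $(L^{(1)},k)$ with $L^{(1)}\in\mathcal A^{(1)}$ and $k\in[N_2]$, form a representative of $\overline L$ by replacing each factor $P_i\neq I$ of $L^{(1)}$ with $R^{P_i}_k$ in block $i$. A physical qubit $(i,j)$ is then covered by $(L^{(1)},k)$ if and only if $L^{(1)}$ is supported on qubit $i$ of $\mathcal S_1$ \emph{and} $R^{P_i}_k$ is supported on qubit $j$ of $\mathcal S_2$; the $c_1$- and $c_2$-disjointness of the respective collections bound the total number of covers by $c_1 c_2$. Thus this multiset is a $(c_1 c_2)$-disjoint collection of size $c_1\Delta(\overline L^{(1)})\cdot c_2\Delta(\mathcal S_2)$, so $\Delta(\overline L)\geq \Delta(\overline L^{(1)})\,\Delta(\mathcal S_2)\geq \Delta(\mathcal S_1)\,\Delta(\mathcal S_2)$. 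Minimizing over $\overline L$ and invoking Proposition~\ref{proposition_disjointness_equiv} yields \eqref{eq_disjointness_concat}.

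For the second part, I would combine \eqref{eq_disjointness_concat} with $d_\downarrow \geq d^{(1)}_\downarrow d^{(2)}_\downarrow$, which is stated in the preliminaries, and $d_\uparrow \leq d^{(1)}_\uparrow d^{(2)}_\uparrow$, the latter because any $\overline L$ of the concatenated code admits a representative obtained from a minimum-weight representative of $\overline L^{(1)}$ by replacing each non-identity factor with a minimum-weight $\mathcal S_2$-representative of the corresponding single-qubit Pauli. Writing $r = M_\mathrm{max}$, the definition of each $M_i$ yields $d^{(i)}_\uparrow/d^{(i)}_\downarrow < (\Delta^{(i)})^{r-1}$ for $i\in[2]$; multiplying these two inequalities and using the three estimates above produces $d_\uparrow/d_\downarrow < \Delta(\mathcal S_1 \lhd \mathcal S_2)^{r-1}$, and then \eqref{eq_level_bound} gives $M\leq r$. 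The main subtlety lies in the first part: the index $k$ must be synchronized across all blocks and across the three Pauli types, which is what the truncation to a common size $N_2$ accomplishes. If one instead selected the $\mathcal S_2$-representatives independently in every block, the multiset would blow up to size $N_1 N_2^{|\supp L^{(1)}|}$ while the per-qubit overlap would grow correspondingly, and the clean $(c_1c_2)$-disjoint bookkeeping that matches the claimed $\Delta(\mathcal S_1)\Delta(\mathcal S_2)$ would be lost.
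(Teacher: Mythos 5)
Your proposal is correct and takes essentially the same route as the paper: the identical synchronized product construction (the same inner-code representative index $k$ used across all blocks and Pauli types) yielding a $c_1c_2$-disjoint collection of size at least $c_1c_2\,\Delta(\mathcal S_1)\Delta(\mathcal S_2)$, and the same combination of $d_\uparrow(\mathcal S_1\lhd\mathcal S_2)\leq d^{(1)}_\uparrow d^{(2)}_\uparrow$, $d_\downarrow(\mathcal S_1\lhd\mathcal S_2)\geq d^{(1)}_\downarrow d^{(2)}_\downarrow$ and Eq.~\eqref{eq_level_bound} for the hierarchy bound. The only differences are cosmetic: your truncation size should be $\lceil c_2\Delta(\mathcal S_2)\rceil$ rather than $c_2\Delta(\mathcal S_2)$ when the latter is not an integer, and your bookkeeping $d^{(i)}_\uparrow/d^{(i)}_\downarrow<(\Delta^{(i)})^{M_\mathrm{max}-1}$ is in fact the sharper (and needed) form of the inequality that the paper's displayed chain states with exponent $M_\mathrm{max}$.
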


\begin{proof}
Let $\mathcal P = \{I, X,Y,Z\}$ be the set of single-qubit Pauli operators (modulo the phase) and $n_i$ be the number of physical qubits of the stabilizer code $\mathcal S_i$ for $i\in [2]$.
Let $K^P$ be a representative of the logical operator $\overline K^P$ for the stabilizer code $\mathcal S_2$, where we always choose $K^I$ to be the identity operator.
Note that if $L = \bigotimes_{i = 1}^{n_1} {P_i}$ implements a logical Pauli operator $\overline L$ for the stabilizer code $\mathcal S_1$, where $P_i \in \mathcal P$ for $i\in [n_1]$, then the following operator
\begin{equation}
L\lhd \left\{K^P\right\}_{P\in \mathcal P} = \bigotimes_{i=1}^{n_1} K^{P_i}
\end{equation}
implements $\overline L$ for the concatenated code $\mathcal S_1 \lhd \mathcal S_2$.

Using Theorem~\ref{theorem_finite_c} and Proposition~\ref{proposition_disjointness_equiv} we can find a positive integer $c^{(2)}$, such that for any $P\in\mathcal P\setminus \{I\}$ we have
\begin{equation}
\dis[c^{(2)}] {\overline K^P} \geq \dis {\mathcal S_2}.
\end{equation}
Subsequently, for $\overline K^P$ we can find a $c^{(2)}$-disjoint collection
\begin{equation}
\big\{ K^{P}_{i} \mathrel{\big |}  i\in [m] \big\},
\end{equation}
where $m = \lceil c^{(2)}\dis{\mathcal S_2}\rceil$.
Similarly, we can find a positive integer $c^{(1)}$, such that for any non-trivial logical Pauli operator $\overline L$ for the stabilizer code $\mathcal S_1$ we have
\begin{equation}
\dis[c^{(1)}]{\overline L} \geq \dis{\mathcal S_1}.
\end{equation}
This, in turn, implies the existence of a $c^{(1)}$-disjoint collection $\mathcal A$ for $\overline L$ that satisfies the following inequality
\begin{equation}
|\mathcal A|\geq \lceil c^{(1)}\dis {\mathcal S_1}\rceil.
\end{equation}

Let $\mathcal A'$ be a collection of representatives of $\overline L$ for the concatenated code $\mathcal S_1 \lhd \mathcal S_2$ defined as follows
\begin{equation}
\mathcal A' = \bigg\{ L \lhd \Big\{ K^P_i \Big\}_{P\in\mathcal P} 
\mathrel{\bigg |} L \in \mathcal A, i \in [m] \bigg\}.
\end{equation}
By construction, we have
\begin{equation}
|\mathcal A'| = m |\mathcal A| \geq c^{(1)} c^{(2)} \dis{\mathcal S_1} \dis{\mathcal S_2}.
\end{equation}
It is straightforward to show that $\mathcal A'$ is $c^{(1)}c^{(2)}$-disjoint.
Thus, we obtain
\begin{eqnarray}
\dis{\mathcal S_1\lhd \mathcal S_2} &\geq& \dis[c^{(1)} c^{(2)}]{\overline L}\\
&\geq& |\mathcal A'|/(c^{(1)} c^{(2)}) \geq \dis{\mathcal S_1} \dis{\mathcal S_2},
\end{eqnarray}
which establishes the inequality in Eq.~\eqref{eq_disjointness_concat}

Let $M_i = \log_{\dis{\mathcal S_i}} \left(d^{(i)}_\uparrow/d^{(i)}_\downarrow\right)$.
One can easily establish the following inequalities
\begin{eqnarray}
d_\uparrow(\mathcal S_1 \lhd \mathcal S_2) &\leq& d^{(1)}_\uparrow d^{(2)}_\uparrow,\\
d_\downarrow(\mathcal S_1 \lhd \mathcal S_2) &\geq& d^{(1)}_\downarrow d^{(2)}_\downarrow,
\end{eqnarray}
which, together with the inequality in Eq.~\eqref{eq_disjointness_concat}, lead to
\begin{eqnarray}
d_\uparrow(\mathcal S_1 \lhd \mathcal S_2) &\leq& d^{(1)}_\uparrow d^{(2)}_\uparrow
= d^{(1)}_\downarrow d^{(2)}_\downarrow \dis {\mathcal S_1}^{M_1} \dis {\mathcal S_2}^{M_2}\quad\\
&\leq& d_\downarrow(\mathcal S_1 \lhd \mathcal S_2) \dis {\mathcal S_1}^{M_1} \dis {\mathcal S_2}^{M_2}\\
&<& d_\downarrow(\mathcal S_1 \lhd \mathcal S_2)
\left(\dis {\mathcal S_1} \dis {\mathcal S_2}\right)^{M_\text{max}}\\
&\leq& d_\downarrow(\mathcal S_1 \lhd \mathcal S_2) \dis {\mathcal S_1 \lhd \mathcal S_2}^{M_\text{max}},
\end{eqnarray}
where we use the fact that $\dis{\mathcal S_i}> 1$ and $M_\text{max} > M_i$ for $i\in[2]$.
Then, using the inequality in Eq.~\eqref{eq_level_bound}, we obtain a bound on the level of the logical Clifford hierarchy attainable by transversal gates for $\mathcal S_1\lhd \mathcal S_2$, which is the inequality in Eq.~\eqref{eq_level_concat}.
\end{proof}

We remark that Proposition~\ref{proposition_concatenated} asserts that the level of the logical Clifford hierarchy attainable by transversal logical gates for the concatenated code $\mathcal S_1 \lhd \mathcal S_2$ cannot exceed the bounds in Eq.~\eqref{eq_level_bound} for the stabilizer codes $\mathcal S_1$ and $\mathcal S_2$.
However, these bounds are not necessarily saturated.
Thus, we cannot immediately rule out the possibility that transversal logical gates for $\mathcal S_1 \lhd \mathcal S_2$ may attain a level of the logical Clifford hierarchy higher than the levels attained by transversal logical gates for $\mathcal S_1$ and $\mathcal S_2$.
Lastly, we acknowledge that the bound in Eq.~\eqref{eq_disjointness_concat} has been previously independently derived in Ref.~\cite{Webster2020}.

\subsection{Hypergraph product codes}

Let $\mathcal S$ be a CSS code and $\mathcal L^P$ be the set of all non-trivial $P$-type logical operators in a standard logical basis, where $P\in \{ X,Z \}$.
We now introduce the $P$-type disjointness $\Delta^P(\mathcal{S})$ of the CSS code $\mathcal S$ as follows
\begin{equation}
\Delta^P(\mathcal{S}) = \sup_{c\geq 1} \min_{\overline{L} \in \mathcal{L}^P} \dis[c]{\overline{L}}.
\end{equation}
Similarly as for the disjointness, we can show that the $P$-type disjointness satisfies the following equality
\begin{equation}
\Delta^P(\mathcal{S}) = \min_{\overline{L} \in \mathcal{L}^Ps} \sup_{c\geq 1} \dis[c]{\overline{L}}.
\end{equation}
By definition, we have
\begin{equation}
\Delta^{P}(\mathcal S) \geq \dis{\mathcal S}.
\end{equation}

The $P$-type disjointness $\Delta^{P}(\mathcal S)$ is useful when the stabilizer group $\mathcal S$ is generated by only $P$-type operators.
In such a case, the min-distance of the stabilizer code $\mathcal S$ is one, i.e., $d_\downarrow = 1$.
Subsequently, Lemma~2(ii) in Ref.~\cite{Jochym-OConnor2018} implies that $\dis{\mathcal S} = 1$; however, $\Delta^{P}(\mathcal S)$ might still be greater than one.
This proves useful in the following proposition, where we establish an upper bound on the disjointness of hypergraph product codes.

\begin{proposition}
Let $H_i$ be a full-rank binary matrix of size $m_i\times n_i$ and $\mathcal S_i$ be a stabilizer code specified by the binary matrix $(H_i | 0_{m_i, n_i})$, where $i\in[2]$.
Let $\mathcal S$ be a hypergraph product code specified by the binary matrix in Eq.~\eqref{eq_hypergraph_matrix}. 
Then, the following inequality holds
\begin{equation}
\label{eq_disjoint_hyper}
\dis{\mathcal S} \leq \min_{i\in[2]} \Delta^X(\mathcal S_i).
\end{equation}
\end{proposition}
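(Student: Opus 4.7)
The plan is to reduce the bound on $\Delta(\mathcal S)$ to exhibiting a single, carefully chosen X-type logical operator $\overline L$ of $\mathcal S$ whose $\sup_c \dis[c]{\overline L}$ is bounded by $\Delta^X(\mathcal S_1)$. By Proposition~\ref{proposition_disjointness_equiv}, $\Delta(\mathcal S) = \min_{\overline L' \in \mathcal L} \sup_c \dis[c]{\overline L'}$, so any such $\overline L$ suffices; Proposition~\ref{proposition_css} further permits us to restrict attention to X-type representatives when computing $\dis[c]{\overline L}$. The symmetry of the hypergraph product matrix under swapping the roles of $H_1$ and $H_2$ means that proving $\Delta(\mathcal S) \leq \Delta^X(\mathcal S_1)$ suffices, as the analogous bound for $\mathcal S_2$ follows identically.

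To construct $\overline L$, I lift an X-type logical of $\mathcal S_1$ into $\mathcal S$ using the hypergraph product structure. Representing X-type operators on $\mathcal S$ by pairs of matrices $(\ell^{(1)}, \ell^{(2)}) \in \mathbb F_2^{n_1 \times m_2} \oplus \mathbb F_2^{m_1 \times n_2}$, one can check that the X-stabilizers take the form $(H_1^\T \alpha, \alpha H_2)$ for $\alpha \in \mathbb F_2^{m_1 \times m_2}$ and the X-normalizer condition becomes $\ell^{(1)} H_2 = H_1^\T \ell^{(2)}$. Assume the bound is non-vacuous, so that $\Delta^X(\mathcal S_1) < \infty$: then $\mathcal S_1$ has X-type logicals, so I may fix $\overline L_1$ with representative $\ell_1 \in \mathbb F_2^{n_1} \setminus \mathrm{row}(H_1)$ satisfying $\sup_c \dis[c]{\overline L_1} = \Delta^X(\mathcal S_1)$, and a dimension count using the full-rank hypothesis on $H_2$ yields a nonzero $w \in \ker H_2^\T$ whenever $\mathcal S$ itself has non-trivial logicals. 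Setting $(\ell^{(1)}, \ell^{(2)}) = (\ell_1 w^\T, 0)$ satisfies the X-normalizer condition because $w^\T H_2 = 0$, and is not an X-stabilizer because equating it to $(H_1^\T \alpha, \alpha H_2)$ at any column $j$ with $w_j = 1$ would force $\ell_1 \in \mathrm{row}(H_1)$, a contradiction.

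The disjointness bound then comes from a restriction argument. Fix $j_0 \in [m_2]$ with $w_{j_0} = 1$ and identify the first-block qubits $\{(i, j_0)\}_{i \in [n_1]}$ with those of $\mathcal S_1$. The column-$j_0$ restriction of any X-stabilizer $(H_1^\T \alpha, \alpha H_2)$ of $\mathcal S$ is $H_1^\T \alpha_{\cdot, j_0} \in \mathrm{row}(H_1)$, so any two X-type representatives of $\overline L$ restrict to the same coset of $\mathrm{row}(H_1)$, namely the class of $\overline L_1$. Given any $c$-disjoint collection $\mathcal A$ of X-type representatives of $\overline L$, its column-$j_0$ restriction is a $c$-disjoint collection of representatives of $\overline L_1$ of the same cardinality, whence $\dis[c]{\overline L} \leq \dis[c]{\overline L_1}$. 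Taking the supremum over $c$ gives $\sup_c \dis[c]{\overline L} \leq \Delta^X(\mathcal S_1)$, as desired. The main obstacle I anticipate is verifying that a nonzero $w \in \ker H_2^\T$ exists precisely in the cases where $\mathcal S$ has non-trivial logical operators, which reduces to a careful bookkeeping of how the ranks of $H_1$ and $H_2$ determine the logical dimension of the hypergraph product under the full-rank hypothesis.
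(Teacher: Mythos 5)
Your proof follows essentially the same route as the paper's: lift an $X$-type logical $\ell_1$ of $\mathcal S_1$ to $\ell_1\otimes w$ with $w$ a nonzero vector in the left kernel of $H_2$, verify via Proposition~\ref{proposition_css} that the optimal $c$-disjoint collection may be taken $X$-type, and then restrict it to a single column where $w$ is nonzero so that stabilizers restrict to stabilizers of $\mathcal S_1$ and every representative restricts to the class of $\overline L_1$. The existence of a nonzero $w$ with $w^{\T}H_2=0$, which you flag as the remaining obstacle, is likewise simply asserted rather than verified in the paper's proof, so your argument matches the published one in both substance and level of detail.
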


\begin{proof}
In what follows, we perform arithmetic operations modulo $2$.
Let $\spn M$ denote the row span of a binary matrix $M$.
Note that a Pauli operator $L_1$ is a representative of some non-trivial $X$-type logical operator $\overline L_1$ in a standard logical basis for the stabilizer code $\mathcal S_1$ iff $L_1$ is specified by the row vector $(l_1 | 0_{1,n_1})\in\{0,1\}^{2n_1}$, such that $l_1 \not\in\spn H_1$.
Let $l_2\in\{0,1\}^{m_2}$ be a non-zero row vector, such that $l_2 H_2 = 0_{1,n_2}$, and define
\begin{equation}
l = (l_1 \otimes l_2 | 0_{1,n_2 m_1}).
\end{equation}
Then, we have
\begin{equation}
\left(\one_{n_1}\otimes H_2^{\T} \middle| H_1^{\T} \otimes I_{n_2}\right) l^T = 0_{n_1 n_2,1}.
\end{equation}
Moreover, $l\not\in\spn \left(H_1 \otimes \one_{m_2} \middle | \one_{m_1} \otimes H_2\right)$; otherwise, we would obtain that $l_1 \in \spn H_1$, leading to a contradiction.
We thus conclude that a Pauli operator $L$ specified by the row vector $(l | 0_{1,n_1 m_2 + n_2 m_1})$ is a representative of some $X$-type logical operator for the stabilizer code $\mathcal S$.

Let $c$ be a positive integer and $\mathcal A$ be a $c$-disjoint collection for $\overline L$ of the largest possible size, i.e., $|\mathcal A| = c \dis[c]{\overline L}$.
Since the stabilizer code $\mathcal S$ is a CSS code and $\overline L$ is an $X$-type logical operator, thus Proposition~\ref{proposition_css} implies that we can select $\mathcal A$ in a way that it contains only $X$-type operators.
Let $\lambda\in [m_2]$ be a position of any non-zero entry of $l_2$ and define the index subset
\begin{equation}
\Lambda = \{ \lambda + (i-1)m_2 \mathrel{|} i\in[n_1] \}.
\end{equation}
We also treat $\Lambda$ as the subset of qubits whose indices belong to $\Lambda$.
By definition of the stabilizer group of the hypergraph product code, if $S$ is an $X$-type stabilizer operator from $\mathcal S$, then its restriction $S|_\Lambda$ to the subset of qubits $\Lambda$ is an $X$-type stabilizer operator from $\mathcal S_1$.
Moreover, for the representative $L$ of the logical operator $\overline L$, its restriction $L|_\Lambda$ is a representative of the logical operator $\overline L_1$.
Thus, a collection
\begin{equation}
\mathcal A_\Lambda = \{ K|_\Lambda \mathrel{|} K \in\mathcal A \}
\end{equation}
comprises representatives of $\overline L_1$.
Since $\mathcal A$ is $c$-disjoint, it has to be $c$-disjoint on the subset of qubits $\Lambda$, which immediately implies that $\mathcal A_\Lambda$ is $c$-disjoint.
Thus, we obtain
\begin{equation}
\dis[c]{\overline L_1} \geq |\mathcal A_\Lambda|/c = |\mathcal A|/c = \dis[c]{\overline L},
\end{equation}
and, subsequently, $\Delta^X(\mathcal S_1)\geq \dis{\mathcal S}$.

To establish the inequality in Eq.~\eqref{eq_disjoint_hyper}, we show that $\Delta^X(\mathcal S_2)\geq \dis{\mathcal S}$ in a similar way.
First, we note that for any $X$-type logical operator $\overline L_2$ for the stabilizer code $\mathcal S_2$ a Pauli operator $L_2$ is a representative of $\overline L_2$ iff $L_2$ is specified by a row vector $(l'_2|0_{1,n_2})\in \{0,1\}^{2n_2}$, such that $l'_2\not\in\spn H_2$.
Then, we choose $l'_1\in\{0,1\}^{m_1}$ to be any non-zero row vector, such that $l'_1H_1=0_{1,n_1}$, and define
$l' = (0_{1,n_1 m_2} | l'_1\otimes l'_2)$.
Finally, let $c$ be a positive integer.
We can then show that any $c$-disjoint collection for the logical operator $\overline L'$, whose representative is specified by the row vector $(l' | 0_{1,n_1 m_2 + n_2 m_1})$, gives rise to a $c$-disjoint collection for $\overline L_2$.
This, in turn, implies that $\dis[c]{\overline L_2} \geq \dis[c]{\overline L'}$, and, subsequently, 
$\Delta^X(\mathcal S_2)\geq \dis{\mathcal S}$.
\end{proof}

We remark that unlike other bounds in this section, Eq.~\eqref{eq_disjoint_hyper} provides an upper bound on the disjointness.
As such, it does not lead to a bound on the level of the Clifford hierarchy attainable by transversal logical gates for the hypergraph product code.

\section{Discussion}

The main result of our work, which is Theorem~\ref{thm_main}, established that for any positive integer constant $c$ the problem of calculating the $c$-disjointness (or even approximating it up to within a multiplicative factor) is NP-complete.
Although we have not shown that calculating the disjointness of stabilizer codes is hard, Theorem~\ref{thm_main} suggests so.
In general, our results indicate that finding fault-tolerant logical gates for generic quantum error-correcting codes is a computationally challenging task.
Other results presented in our work included: (i) formulating a linear program to calculate the disjointness, (ii) strengthening the main result of Ref.~\cite{Jochym-OConnor2018}, and (iii) providing bounds on the disjointness for various stabilizer code families.
Lastly, we remark that the source code~\cite{Bostanci2020} can be used to numerically estimate the disjointness of small stabilizer codes and, subsequently, rule out the existence of certain transversal logical gates.

We hope that our work initiates and motivates a thorough search of methods of finding the disjointness, as well as fault-tolerant logical gates at large.
We expect that there exist efficient algorithms to calculate or approximate the disjointness for certain code families, such as topological quantum codes.
We emphasize that clever usage of the underlying code symmetries might further simplify this problem.
For example, for codes that are invariant under certain permutations of qubits, the number of variables in the linear program in Eqs.~\eqref{eq:lp1}-\eqref{eq:lp3} to find the disjointness can be reduced from $2^n$ to $2^n/n$.
Moreover, it would be interesting to make the connection between the disjointness and other code quantities, such as the price~\cite{Pastawski2017}.

\acknowledgements{
A.K. thanks David Gosset, Andrew Landahl, Pooya Ronagh and Jamie Sikora for helpful discussions.
A.K. acknowledges funding provided by the Simons Foundation through the ``It from Qubit'' Collaboration.
Research at Perimeter Institute is supported in part by the Government of Canada through the Department of Innovation, Science and Economic Development Canada and by the Province of Ontario through the Ministry of Colleges and Universities.
This work was completed prior to A.K. joining AWS Center for Quantum Computing.}

\bibliography{biblio_complexity}

\begin{thebibliography}{44}%
\makeatletter
\providecommand \@ifxundefined [1]{%
 \@ifx{#1\undefined}
}%
\providecommand \@ifnum [1]{%
 \ifnum #1\expandafter \@firstoftwo
 \else \expandafter \@secondoftwo
 \fi
}%
\providecommand \@ifx [1]{%
 \ifx #1\expandafter \@firstoftwo
 \else \expandafter \@secondoftwo
 \fi
}%
\providecommand \natexlab [1]{#1}%
\providecommand \enquote  [1]{``#1''}%
\providecommand \bibnamefont  [1]{#1}%
\providecommand \bibfnamefont [1]{#1}%
\providecommand \citenamefont [1]{#1}%
\providecommand \href@noop [0]{\@secondoftwo}%
\providecommand \href [0]{\begingroup \@sanitize@url \@href}%
\providecommand \@href[1]{\@@startlink{#1}\@@href}%
\providecommand \@@href[1]{\endgroup#1\@@endlink}%
\providecommand \@sanitize@url [0]{\catcode `\\12\catcode `\$12\catcode
  `\&12\catcode `\#12\catcode `\^12\catcode `\_12\catcode `\%12\relax}%
\providecommand \@@startlink[1]{}%
\providecommand \@@endlink[0]{}%
\providecommand \url  [0]{\begingroup\@sanitize@url \@url }%
\providecommand \@url [1]{\endgroup\@href {#1}{\urlprefix }}%
\providecommand \urlprefix  [0]{URL }%
\providecommand \Eprint [0]{\href }%
\providecommand \doibase [0]{https://doi.org/}%
\providecommand \selectlanguage [0]{\@gobble}%
\providecommand \bibinfo  [0]{\@secondoftwo}%
\providecommand \bibfield  [0]{\@secondoftwo}%
\providecommand \translation [1]{[#1]}%
\providecommand \BibitemOpen [0]{}%
\providecommand \bibitemStop [0]{}%
\providecommand \bibitemNoStop [0]{.\EOS\space}%
\providecommand \EOS [0]{\spacefactor3000\relax}%
\providecommand \BibitemShut  [1]{\csname bibitem#1\endcsname}%
\let\auto@bib@innerbib\@empty
\bibitem [{\citenamefont {Shor}(1996)}]{Shor1996}%
  \BibitemOpen
  \bibfield  {author} {\bibinfo {author} {\bibfnamefont {P.}~\bibnamefont
  {Shor}},\ }\bibfield  {title} {\bibinfo {title} {{Fault-tolerant quantum
  computation}},\ }in\ \href {https://doi.org/10.1109/SFCS.1996.548464} {\emph
  {\bibinfo {booktitle} {Proceedings of 37th Conference on Foundations of
  Computer Science}}}\ (\bibinfo  {publisher} {IEEE Comput. Soc. Press},\
  \bibinfo {year} {1996})\ pp.\ \bibinfo {pages} {56--65}\BibitemShut {NoStop}%
\bibitem [{\citenamefont {Steane}(1997)}]{Steane1997}%
  \BibitemOpen
  \bibfield  {author} {\bibinfo {author} {\bibfnamefont {A.~M.}\ \bibnamefont
  {Steane}},\ }\bibfield  {title} {\bibinfo {title} {{Active stabilization,
  quantum computation, and quantum state synthesis}},\ }\href
  {https://doi.org/10.1103/PhysRevLett.78.2252} {\bibfield  {journal} {\bibinfo
   {journal} {Physical Review Letters}\ }\textbf {\bibinfo {volume} {78}},\
  \bibinfo {pages} {2252} (\bibinfo {year} {1997})}\BibitemShut {NoStop}%
\bibitem [{\citenamefont {Knill}(2005{\natexlab{a}})}]{Knill2005}%
  \BibitemOpen
  \bibfield  {author} {\bibinfo {author} {\bibfnamefont {E.}~\bibnamefont
  {Knill}},\ }\bibfield  {title} {\bibinfo {title} {{Quantum computing with
  realistically noisy devices}},\ }\href {https://doi.org/10.1038/nature03350}
  {\bibfield  {journal} {\bibinfo  {journal} {Nature}\ }\textbf {\bibinfo
  {volume} {434}},\ \bibinfo {pages} {39} (\bibinfo {year}
  {2005}{\natexlab{a}})}\BibitemShut {NoStop}%
\bibitem [{\citenamefont {Knill}(2005{\natexlab{b}})}]{Knill2005a}%
  \BibitemOpen
  \bibfield  {author} {\bibinfo {author} {\bibfnamefont {E.}~\bibnamefont
  {Knill}},\ }\bibfield  {title} {\bibinfo {title} {{Scalable quantum computing
  in the presence of large detected-error rates}},\ }\href
  {https://doi.org/10.1103/PhysRevA.71.042322} {\bibfield  {journal} {\bibinfo
  {journal} {Physical Review A}\ }\textbf {\bibinfo {volume} {71}},\ \bibinfo
  {pages} {042322} (\bibinfo {year} {2005}{\natexlab{b}})}\BibitemShut
  {NoStop}%
\bibitem [{\citenamefont {Bomb{\'{i}}n}(2015)}]{Bombin2013}%
  \BibitemOpen
  \bibfield  {author} {\bibinfo {author} {\bibfnamefont {H.}~\bibnamefont
  {Bomb{\'{i}}n}},\ }\bibfield  {title} {\bibinfo {title} {{Gauge color codes:
  Optimal transversal gates and gauge fixing in topological stabilizer
  codes}},\ }\href {https://doi.org/10.1088/1367-2630/17/8/083002} {\bibfield
  {journal} {\bibinfo  {journal} {New Journal of Physics}\ }\textbf {\bibinfo
  {volume} {17}},\ \bibinfo {pages} {083002} (\bibinfo {year}
  {2015})}\BibitemShut {NoStop}%
\bibitem [{\citenamefont {Kubica}\ and\ \citenamefont
  {Beverland}(2015)}]{Kubica2015a}%
  \BibitemOpen
  \bibfield  {author} {\bibinfo {author} {\bibfnamefont {A.}~\bibnamefont
  {Kubica}}\ and\ \bibinfo {author} {\bibfnamefont {M.~E.}\ \bibnamefont
  {Beverland}},\ }\bibfield  {title} {\bibinfo {title} {{Universal transversal
  gates with color codes: A simplified approach}},\ }\href
  {https://doi.org/10.1103/PhysRevA.91.032330} {\bibfield  {journal} {\bibinfo
  {journal} {Physical Review A}\ }\textbf {\bibinfo {volume} {91}},\ \bibinfo
  {pages} {032330} (\bibinfo {year} {2015})}\BibitemShut {NoStop}%
\bibitem [{\citenamefont {Watson}\ \emph {et~al.}(2015)\citenamefont {Watson},
  \citenamefont {Campbell}, \citenamefont {Anwar},\ and\ \citenamefont
  {Browne}}]{Watson2015}%
  \BibitemOpen
  \bibfield  {author} {\bibinfo {author} {\bibfnamefont {F.~H.}\ \bibnamefont
  {Watson}}, \bibinfo {author} {\bibfnamefont {E.~T.}\ \bibnamefont
  {Campbell}}, \bibinfo {author} {\bibfnamefont {H.}~\bibnamefont {Anwar}},\
  and\ \bibinfo {author} {\bibfnamefont {D.~E.}\ \bibnamefont {Browne}},\
  }\bibfield  {title} {\bibinfo {title} {{Qudit color codes and gauge color
  codes in all spatial dimensions}},\ }\href
  {https://doi.org/10.1103/PhysRevA.92.022312} {\bibfield  {journal} {\bibinfo
  {journal} {Physical Review A}\ }\textbf {\bibinfo {volume} {92}},\ \bibinfo
  {pages} {022312} (\bibinfo {year} {2015})}\BibitemShut {NoStop}%
\bibitem [{\citenamefont {Kubica}\ \emph {et~al.}(2015)\citenamefont {Kubica},
  \citenamefont {Yoshida},\ and\ \citenamefont {Pastawski}}]{Kubica2015}%
  \BibitemOpen
  \bibfield  {author} {\bibinfo {author} {\bibfnamefont {A.}~\bibnamefont
  {Kubica}}, \bibinfo {author} {\bibfnamefont {B.}~\bibnamefont {Yoshida}},\
  and\ \bibinfo {author} {\bibfnamefont {F.}~\bibnamefont {Pastawski}},\
  }\bibfield  {title} {\bibinfo {title} {{Unfolding the color code}},\ }\href
  {https://doi.org/10.1088/1367-2630/17/8/083026} {\bibfield  {journal}
  {\bibinfo  {journal} {New Journal of Physics}\ }\textbf {\bibinfo {volume}
  {17}},\ \bibinfo {pages} {083026} (\bibinfo {year} {2015})}\BibitemShut
  {NoStop}%
\bibitem [{\citenamefont {Bomb{\'{i}}n}(2018{\natexlab{a}})}]{Bombin2018}%
  \BibitemOpen
  \bibfield  {author} {\bibinfo {author} {\bibfnamefont {H.}~\bibnamefont
  {Bomb{\'{i}}n}},\ }\bibfield  {title} {\bibinfo {title} {{Transversal gates
  and error propagation in 3D topological codes}},\ }\href
  {http://arxiv.org/abs/1810.09575} {\bibfield  {journal} {\bibinfo  {journal}
  {arXiv:1810.09575}\ } (\bibinfo {year} {2018}{\natexlab{a}})}\BibitemShut
  {NoStop}%
\bibitem [{\citenamefont {Jochym-O'Connor}\ and\ \citenamefont
  {Yoder}(2021)}]{Jochym-OConnor2021}%
  \BibitemOpen
  \bibfield  {author} {\bibinfo {author} {\bibfnamefont {T.}~\bibnamefont
  {Jochym-O'Connor}}\ and\ \bibinfo {author} {\bibfnamefont {T.~J.}\
  \bibnamefont {Yoder}},\ }\bibfield  {title} {\bibinfo {title}
  {{Four-dimensional toric code with non-Clifford transversal gates}},\ }\href
  {https://doi.org/10.1103/physrevresearch.3.013118} {\bibfield  {journal}
  {\bibinfo  {journal} {Physical Review Research}\ }\textbf {\bibinfo {volume}
  {3}},\ \bibinfo {pages} {13118} (\bibinfo {year} {2021})}\BibitemShut
  {NoStop}%
\bibitem [{\citenamefont {Vasmer}\ and\ \citenamefont
  {Kubica}(2021)}]{Vasmer2021}%
  \BibitemOpen
  \bibfield  {author} {\bibinfo {author} {\bibfnamefont {M.}~\bibnamefont
  {Vasmer}}\ and\ \bibinfo {author} {\bibfnamefont {A.}~\bibnamefont
  {Kubica}},\ }\href@noop {} {}\bibinfo {howpublished} {in preparation}
  (\bibinfo {year} {2021})\BibitemShut {NoStop}%
\bibitem [{\citenamefont {Bomb{\'{i}}n}(2016)}]{Bombin2016dim}%
  \BibitemOpen
  \bibfield  {author} {\bibinfo {author} {\bibfnamefont {H.}~\bibnamefont
  {Bomb{\'{i}}n}},\ }\bibfield  {title} {\bibinfo {title} {{Dimensional jump in
  quantum error correction}},\ }\href
  {https://doi.org/10.1088/1367-2630/18/4/043038} {\bibfield  {journal}
  {\bibinfo  {journal} {New Journal of Physics}\ }\textbf {\bibinfo {volume}
  {18}},\ \bibinfo {pages} {043038} (\bibinfo {year} {2016})}\BibitemShut
  {NoStop}%
\bibitem [{\citenamefont {Bomb{\'{i}}n}(2018{\natexlab{b}})}]{Bombin2018a}%
  \BibitemOpen
  \bibfield  {author} {\bibinfo {author} {\bibfnamefont {H.}~\bibnamefont
  {Bomb{\'{i}}n}},\ }\bibfield  {title} {\bibinfo {title} {{2D quantum
  computation with 3D topological codes}},\ }\href
  {http://arxiv.org/abs/1810.09571} {\bibfield  {journal} {\bibinfo  {journal}
  {arXiv:1810.09571}\ } (\bibinfo {year} {2018}{\natexlab{b}})}\BibitemShut
  {NoStop}%
\bibitem [{\citenamefont {Kubica}(2018)}]{Kubicathesis}%
  \BibitemOpen
  \bibfield  {author} {\bibinfo {author} {\bibfnamefont {A.}~\bibnamefont
  {Kubica}},\ }\emph {\bibinfo {title} {{The ABCs of the color code: A study of
  topological quantum codes as toy models for fault-tolerant quantum
  computation and quantum phases of matter}}},\ \href
  {https://thesis.library.caltech.edu/10955/} {Ph.D. thesis} (\bibinfo {year}
  {2018})\BibitemShut {NoStop}%
\bibitem [{\citenamefont {Vasmer}\ and\ \citenamefont
  {Browne}(2019)}]{Vasmer2019}%
  \BibitemOpen
  \bibfield  {author} {\bibinfo {author} {\bibfnamefont {M.}~\bibnamefont
  {Vasmer}}\ and\ \bibinfo {author} {\bibfnamefont {D.~E.}\ \bibnamefont
  {Browne}},\ }\bibfield  {title} {\bibinfo {title} {{Three-dimensional surface
  codes: Transversal gates and fault-tolerant architectures}},\ }\href
  {https://doi.org/10.1103/PhysRevA.100.012312} {\bibfield  {journal} {\bibinfo
   {journal} {Physical Review A}\ }\textbf {\bibinfo {volume} {100}},\ \bibinfo
  {pages} {012312} (\bibinfo {year} {2019})}\BibitemShut {NoStop}%
\bibitem [{\citenamefont {Brown}(2020)}]{Brown2020}%
  \BibitemOpen
  \bibfield  {author} {\bibinfo {author} {\bibfnamefont {B.~J.}\ \bibnamefont
  {Brown}},\ }\bibfield  {title} {\bibinfo {title} {{A fault-tolerant
  non-clifford gate for the surface code in two dimensions}},\ }\bibfield
  {journal} {\bibinfo  {journal} {Science Advances}\ }\textbf {\bibinfo
  {volume} {6}},\ \href {https://doi.org/10.1126/sciadv.aay4929}
  {10.1126/sciadv.aay4929} (\bibinfo {year} {2020})\BibitemShut {NoStop}%
\bibitem [{\citenamefont {Iverson}\ and\ \citenamefont
  {Kubica}(2021)}]{Iverson2021}%
  \BibitemOpen
  \bibfield  {author} {\bibinfo {author} {\bibfnamefont {J.}~\bibnamefont
  {Iverson}}\ and\ \bibinfo {author} {\bibfnamefont {A.}~\bibnamefont
  {Kubica}},\ }\href@noop {} {}\bibinfo {howpublished} {in preparation}
  (\bibinfo {year} {2021})\BibitemShut {NoStop}%
\bibitem [{\citenamefont {Knill}(2004{\natexlab{a}})}]{Knill2004}%
  \BibitemOpen
  \bibfield  {author} {\bibinfo {author} {\bibfnamefont {E.}~\bibnamefont
  {Knill}},\ }\bibfield  {title} {\bibinfo {title} {{Fault-Tolerant
  Postselected Quantum Computation: Schemes}},\ }\href
  {https://arxiv.org/abs/quant-ph/0402171v1} {\bibfield  {journal} {\bibinfo
  {journal} {arXiv:quant-ph/0402171}\ } (\bibinfo {year}
  {2004}{\natexlab{a}})}\BibitemShut {NoStop}%
\bibitem [{\citenamefont {Knill}(2004{\natexlab{b}})}]{Knill2004a}%
  \BibitemOpen
  \bibfield  {author} {\bibinfo {author} {\bibfnamefont {E.}~\bibnamefont
  {Knill}},\ }\bibfield  {title} {\bibinfo {title} {{Fault-Tolerant
  Postselected Quantum Computation: Threshold Analysis}},\ }\href
  {https://arxiv.org/abs/quant-ph/0404104v1
  http://arxiv.org/abs/quant-ph/0404104} {\bibfield  {journal} {\bibinfo
  {journal} {arXiv:quant-ph/0404104}\ } (\bibinfo {year}
  {2004}{\natexlab{b}})}\BibitemShut {NoStop}%
\bibitem [{\citenamefont {Bravyi}\ and\ \citenamefont
  {Kitaev}(2005)}]{Bravyi2005}%
  \BibitemOpen
  \bibfield  {author} {\bibinfo {author} {\bibfnamefont {S.}~\bibnamefont
  {Bravyi}}\ and\ \bibinfo {author} {\bibfnamefont {A.}~\bibnamefont
  {Kitaev}},\ }\bibfield  {title} {\bibinfo {title} {{Universal quantum
  computation with ideal Clifford gates and noisy ancillas}},\ }\href
  {https://doi.org/10.1103/PhysRevA.71.022316} {\bibfield  {journal} {\bibinfo
  {journal} {Physical Review A}\ }\textbf {\bibinfo {volume} {71}},\ \bibinfo
  {pages} {022316} (\bibinfo {year} {2005})}\BibitemShut {NoStop}%
\bibitem [{\citenamefont {Beverland}\ \emph {et~al.}(2021)\citenamefont
  {Beverland}, \citenamefont {Kubica},\ and\ \citenamefont
  {Svore}}]{Beverland2021}%
  \BibitemOpen
  \bibfield  {author} {\bibinfo {author} {\bibfnamefont {M.~E.}\ \bibnamefont
  {Beverland}}, \bibinfo {author} {\bibfnamefont {A.}~\bibnamefont {Kubica}},\
  and\ \bibinfo {author} {\bibfnamefont {K.~M.}\ \bibnamefont {Svore}},\
  }\bibfield  {title} {\bibinfo {title} {{Cost of Universality: A Comparative
  Study of the Overhead of State Distillation and Code Switching with Color
  Codes}},\ }\href {https://doi.org/10.1103/PRXQuantum.2.020341} {\bibfield
  {journal} {\bibinfo  {journal} {PRX Quantum}\ }\textbf {\bibinfo {volume}
  {2}},\ \bibinfo {pages} {020341} (\bibinfo {year} {2021})}\BibitemShut
  {NoStop}%
\bibitem [{\citenamefont {Eastin}\ and\ \citenamefont
  {Knill}(2009)}]{Eastin2009}%
  \BibitemOpen
  \bibfield  {author} {\bibinfo {author} {\bibfnamefont {B.}~\bibnamefont
  {Eastin}}\ and\ \bibinfo {author} {\bibfnamefont {E.}~\bibnamefont {Knill}},\
  }\bibfield  {title} {\bibinfo {title} {{Restrictions on Transversal Encoded
  Quantum Gate Sets}},\ }\href {https://doi.org/10.1103/PhysRevLett.102.110502}
  {\bibfield  {journal} {\bibinfo  {journal} {Physical Review Letters}\
  }\textbf {\bibinfo {volume} {102}},\ \bibinfo {pages} {110502} (\bibinfo
  {year} {2009})}\BibitemShut {NoStop}%
\bibitem [{\citenamefont {Zeng}\ \emph {et~al.}(2011)\citenamefont {Zeng},
  \citenamefont {Cross},\ and\ \citenamefont {Chuang}}]{Zeng2011}%
  \BibitemOpen
  \bibfield  {author} {\bibinfo {author} {\bibfnamefont {B.}~\bibnamefont
  {Zeng}}, \bibinfo {author} {\bibfnamefont {A.}~\bibnamefont {Cross}},\ and\
  \bibinfo {author} {\bibfnamefont {I.~L.}\ \bibnamefont {Chuang}},\ }\bibfield
   {title} {\bibinfo {title} {{Transversality versus universality for additive
  quantum codes}},\ }\href {https://doi.org/10.1109/TIT.2011.2161917}
  {\bibfield  {journal} {\bibinfo  {journal} {IEEE Transactions on Information
  Theory}\ }\textbf {\bibinfo {volume} {57}},\ \bibinfo {pages} {6272}
  (\bibinfo {year} {2011})}\BibitemShut {NoStop}%
\bibitem [{\citenamefont {Faist}\ \emph {et~al.}(2020)\citenamefont {Faist},
  \citenamefont {Nezami}, \citenamefont {Albert}, \citenamefont {Salton},
  \citenamefont {Pastawski}, \citenamefont {Hayden},\ and\ \citenamefont
  {Preskill}}]{Faist2019}%
  \BibitemOpen
  \bibfield  {author} {\bibinfo {author} {\bibfnamefont {P.}~\bibnamefont
  {Faist}}, \bibinfo {author} {\bibfnamefont {S.}~\bibnamefont {Nezami}},
  \bibinfo {author} {\bibfnamefont {V.~V.}\ \bibnamefont {Albert}}, \bibinfo
  {author} {\bibfnamefont {G.}~\bibnamefont {Salton}}, \bibinfo {author}
  {\bibfnamefont {F.}~\bibnamefont {Pastawski}}, \bibinfo {author}
  {\bibfnamefont {P.}~\bibnamefont {Hayden}},\ and\ \bibinfo {author}
  {\bibfnamefont {J.}~\bibnamefont {Preskill}},\ }\bibfield  {title} {\bibinfo
  {title} {{Continuous Symmetries and Approximate Quantum Error Correction}},\
  }\href {https://doi.org/10.1103/PhysRevX.10.041018} {\bibfield  {journal}
  {\bibinfo  {journal} {Physical Review X}\ }\textbf {\bibinfo {volume} {10}},\
  \bibinfo {pages} {041018} (\bibinfo {year} {2020})}\BibitemShut {NoStop}%
\bibitem [{\citenamefont {Woods}\ and\ \citenamefont
  {Alhambra}(2020)}]{Woods2020}%
  \BibitemOpen
  \bibfield  {author} {\bibinfo {author} {\bibfnamefont {M.~P.}\ \bibnamefont
  {Woods}}\ and\ \bibinfo {author} {\bibfnamefont {{\'{A}}.~M.}\ \bibnamefont
  {Alhambra}},\ }\bibfield  {title} {\bibinfo {title} {{Continuous groups of
  transversal gates for quantum error correcting codes from finite clock
  reference frames}},\ }\href {https://doi.org/10.22331/q-2020-03-23-245}
  {\bibfield  {journal} {\bibinfo  {journal} {Quantum}\ }\textbf {\bibinfo
  {volume} {4}},\ \bibinfo {pages} {245} (\bibinfo {year} {2020})}\BibitemShut
  {NoStop}%
\bibitem [{\citenamefont {Kubica}\ and\ \citenamefont
  {Demkowicz-Dobrza{\'{n}}ski}(2021)}]{Kubica2020}%
  \BibitemOpen
  \bibfield  {author} {\bibinfo {author} {\bibfnamefont {A.}~\bibnamefont
  {Kubica}}\ and\ \bibinfo {author} {\bibfnamefont {R.}~\bibnamefont
  {Demkowicz-Dobrza{\'{n}}ski}},\ }\bibfield  {title} {\bibinfo {title} {{Using
  Quantum Metrological Bounds in Quantum Error Correction: A Simple Proof of
  the Approximate Eastin-Knill Theorem}},\ }\href
  {https://doi.org/10.1103/PhysRevLett.126.150503} {\bibfield  {journal}
  {\bibinfo  {journal} {Physical Review Letters}\ }\textbf {\bibinfo {volume}
  {126}},\ \bibinfo {pages} {150503} (\bibinfo {year} {2021})}\BibitemShut
  {NoStop}%
\bibitem [{\citenamefont {Bravyi}\ and\ \citenamefont
  {K{\"{o}}nig}(2013)}]{Bravyi2013}%
  \BibitemOpen
  \bibfield  {author} {\bibinfo {author} {\bibfnamefont {S.}~\bibnamefont
  {Bravyi}}\ and\ \bibinfo {author} {\bibfnamefont {R.}~\bibnamefont
  {K{\"{o}}nig}},\ }\bibfield  {title} {\bibinfo {title} {{Classification of
  Topologically Protected Gates for Local Stabilizer Codes}},\ }\href
  {https://doi.org/10.1103/PhysRevLett.110.170503} {\bibfield  {journal}
  {\bibinfo  {journal} {Physical Review Letters}\ }\textbf {\bibinfo {volume}
  {110}},\ \bibinfo {pages} {170503} (\bibinfo {year} {2013})}\BibitemShut
  {NoStop}%
\bibitem [{\citenamefont {Pastawski}\ and\ \citenamefont
  {Yoshida}(2015)}]{Pastawski2014}%
  \BibitemOpen
  \bibfield  {author} {\bibinfo {author} {\bibfnamefont {F.}~\bibnamefont
  {Pastawski}}\ and\ \bibinfo {author} {\bibfnamefont {B.}~\bibnamefont
  {Yoshida}},\ }\bibfield  {title} {\bibinfo {title} {{Fault-tolerant logical
  gates in quantum error-correcting codes}},\ }\href
  {https://doi.org/10.1103/PhysRevA.91.012305} {\bibfield  {journal} {\bibinfo
  {journal} {Physical Review A}\ }\textbf {\bibinfo {volume} {91}},\ \bibinfo
  {pages} {13} (\bibinfo {year} {2015})}\BibitemShut {NoStop}%
\bibitem [{\citenamefont {Beverland}\ \emph {et~al.}(2016)\citenamefont
  {Beverland}, \citenamefont {Buerschaper}, \citenamefont {Koenig},
  \citenamefont {Pastawski}, \citenamefont {Preskill},\ and\ \citenamefont
  {Sijher}}]{Beverland2014}%
  \BibitemOpen
  \bibfield  {author} {\bibinfo {author} {\bibfnamefont {M.~E.}\ \bibnamefont
  {Beverland}}, \bibinfo {author} {\bibfnamefont {O.}~\bibnamefont
  {Buerschaper}}, \bibinfo {author} {\bibfnamefont {R.}~\bibnamefont {Koenig}},
  \bibinfo {author} {\bibfnamefont {F.}~\bibnamefont {Pastawski}}, \bibinfo
  {author} {\bibfnamefont {J.}~\bibnamefont {Preskill}},\ and\ \bibinfo
  {author} {\bibfnamefont {S.}~\bibnamefont {Sijher}},\ }\bibfield  {title}
  {\bibinfo {title} {{Protected gates for topological quantum field
  theories}},\ }\href {https://doi.org/10.1063/1.4939783} {\bibfield  {journal}
  {\bibinfo  {journal} {Journal of Mathematical Physics}\ }\textbf {\bibinfo
  {volume} {57}},\ \bibinfo {pages} {44} (\bibinfo {year} {2016})}\BibitemShut
  {NoStop}%
\bibitem [{\citenamefont {Jochym-O'Connor}\ \emph {et~al.}(2018)\citenamefont
  {Jochym-O'Connor}, \citenamefont {Kubica},\ and\ \citenamefont
  {Yoder}}]{Jochym-OConnor2018}%
  \BibitemOpen
  \bibfield  {author} {\bibinfo {author} {\bibfnamefont {T.}~\bibnamefont
  {Jochym-O'Connor}}, \bibinfo {author} {\bibfnamefont {A.}~\bibnamefont
  {Kubica}},\ and\ \bibinfo {author} {\bibfnamefont {T.~J.}\ \bibnamefont
  {Yoder}},\ }\bibfield  {title} {\bibinfo {title} {{Disjointness of Stabilizer
  Codes and Limitations on Fault-Tolerant Logical Gates}},\ }\href
  {https://doi.org/10.1103/PhysRevX.8.021047} {\bibfield  {journal} {\bibinfo
  {journal} {Physical Review X}\ }\textbf {\bibinfo {volume} {8}},\ \bibinfo
  {pages} {021047} (\bibinfo {year} {2018})}\BibitemShut {NoStop}%
\bibitem [{\citenamefont {Webster}\ \emph {et~al.}(2020)\citenamefont
  {Webster}, \citenamefont {Vasmer}, \citenamefont {Scruby},\ and\
  \citenamefont {Bartlett}}]{Webster2020}%
  \BibitemOpen
  \bibfield  {author} {\bibinfo {author} {\bibfnamefont {P.}~\bibnamefont
  {Webster}}, \bibinfo {author} {\bibfnamefont {M.}~\bibnamefont {Vasmer}},
  \bibinfo {author} {\bibfnamefont {T.~R.}\ \bibnamefont {Scruby}},\ and\
  \bibinfo {author} {\bibfnamefont {S.~D.}\ \bibnamefont {Bartlett}},\
  }\bibfield  {title} {\bibinfo {title} {{Universal Fault-Tolerant Quantum
  Computing with Stabiliser Codes}},\ }\href {http://arxiv.org/abs/2012.05260}
  {\bibfield  {journal} {\bibinfo  {journal} {arXiv:2012.05260}\ } (\bibinfo
  {year} {2020})}\BibitemShut {NoStop}%
\bibitem [{\citenamefont {Gottesman}(1996)}]{Gottesman1996}%
  \BibitemOpen
  \bibfield  {author} {\bibinfo {author} {\bibfnamefont {D.}~\bibnamefont
  {Gottesman}},\ }\bibfield  {title} {\bibinfo {title} {{Class of quantum
  error-correcting codes saturating the quantum Hamming bound}},\ }\href
  {https://doi.org/10.1103/PhysRevA.54.1862} {\bibfield  {journal} {\bibinfo
  {journal} {Physical Review A}\ }\textbf {\bibinfo {volume} {54}},\ \bibinfo
  {pages} {1862} (\bibinfo {year} {1996})}\BibitemShut {NoStop}%
\bibitem [{\citenamefont {Gottesman}\ and\ \citenamefont
  {Chuang}(1999)}]{Gottesman1999a}%
  \BibitemOpen
  \bibfield  {author} {\bibinfo {author} {\bibfnamefont {D.}~\bibnamefont
  {Gottesman}}\ and\ \bibinfo {author} {\bibfnamefont {I.~L.}\ \bibnamefont
  {Chuang}},\ }\bibfield  {title} {\bibinfo {title} {{Demonstrating the
  viability of universal quantum computation using teleportation and
  single-qubit operations}},\ }\href {https://doi.org/10.1038/46503} {\ \textbf
  {\bibinfo {volume} {402}},\ \bibinfo {pages} {390} (\bibinfo {year}
  {1999})}\BibitemShut {NoStop}%
\bibitem [{\citenamefont {Landahl}(2020)}]{Landahl2020}%
  \BibitemOpen
  \bibfield  {author} {\bibinfo {author} {\bibfnamefont {A.~J.}\ \bibnamefont
  {Landahl}},\ }\bibfield  {title} {\bibinfo {title} {{The surface code on the
  rhombic dodecahedron}},\ }\href {http://arxiv.org/abs/2010.06628} {\bibfield
  {journal} {\bibinfo  {journal} {arXiv:2010.06628}\ } (\bibinfo {year}
  {2020})}\BibitemShut {NoStop}%
\bibitem [{\citenamefont {Calderbank}\ and\ \citenamefont
  {Shor}(1996)}]{Calderbank1996}%
  \BibitemOpen
  \bibfield  {author} {\bibinfo {author} {\bibfnamefont {A.}~\bibnamefont
  {Calderbank}}\ and\ \bibinfo {author} {\bibfnamefont {P.}~\bibnamefont
  {Shor}},\ }\bibfield  {title} {\bibinfo {title} {{Good quantum
  error-correcting codes exist}},\ }\href
  {https://doi.org/10.1103/PhysRevA.54.1098} {\bibfield  {journal} {\bibinfo
  {journal} {Physical Review A}\ }\textbf {\bibinfo {volume} {54}},\ \bibinfo
  {pages} {1098} (\bibinfo {year} {1996})}\BibitemShut {NoStop}%
\bibitem [{\citenamefont {Steane}(1996)}]{Steane1996CSS}%
  \BibitemOpen
  \bibfield  {author} {\bibinfo {author} {\bibfnamefont {A.}~\bibnamefont
  {Steane}},\ }\bibfield  {title} {\bibinfo {title} {{Multiple-Particle
  Interference and Quantum Error Correction}},\ }\href
  {https://doi.org/10.1098/rspa.1996.0136} {\bibfield  {journal} {\bibinfo
  {journal} {Proceedings of the Royal Society A: Mathematical, Physical and
  Engineering Sciences}\ }\textbf {\bibinfo {volume} {452}},\ \bibinfo {pages}
  {2551} (\bibinfo {year} {1996})}\BibitemShut {NoStop}%
\bibitem [{\citenamefont {Knill}\ and\ \citenamefont
  {Laflamme}(1996)}]{Knill1996}%
  \BibitemOpen
  \bibfield  {author} {\bibinfo {author} {\bibfnamefont {E.}~\bibnamefont
  {Knill}}\ and\ \bibinfo {author} {\bibfnamefont {R.}~\bibnamefont
  {Laflamme}},\ }\bibfield  {title} {\bibinfo {title} {{Concatenated Quantum
  Codes}},\ }\href {http://arxiv.org/abs/quant-ph/9608012} {\bibfield
  {journal} {\bibinfo  {journal} {arXiv:quant-ph/9608012}\ } (\bibinfo {year}
  {1996})}\BibitemShut {NoStop}%
\bibitem [{\citenamefont {Tillich}\ and\ \citenamefont
  {Zemor}(2014)}]{Tillich2014}%
  \BibitemOpen
  \bibfield  {author} {\bibinfo {author} {\bibfnamefont {J.~P.}\ \bibnamefont
  {Tillich}}\ and\ \bibinfo {author} {\bibfnamefont {G.}~\bibnamefont
  {Zemor}},\ }\bibfield  {title} {\bibinfo {title} {{Quantum LDPC codes with
  positive rate and minimum distance proportional to the square root of the
  blocklength}},\ }\href {https://doi.org/10.1109/TIT.2013.2292061} {\bibfield
  {journal} {\bibinfo  {journal} {IEEE Transactions on Information Theory}\
  }\textbf {\bibinfo {volume} {60}},\ \bibinfo {pages} {1193} (\bibinfo {year}
  {2014})}\BibitemShut {NoStop}%
\bibitem [{\citenamefont {Garey}\ and\ \citenamefont
  {Johnson}(1979)}]{Garey1979}%
  \BibitemOpen
  \bibfield  {author} {\bibinfo {author} {\bibfnamefont {M.~R.}\ \bibnamefont
  {Garey}}\ and\ \bibinfo {author} {\bibfnamefont {D.~S.}\ \bibnamefont
  {Johnson}},\ }\href@noop {} {\emph {\bibinfo {title} {{Computer and
  Intractability: A Guide to the Theory of NP-Completeness}}}}\ (\bibinfo
  {publisher} {W. H. Freeman and Company},\ \bibinfo {year} {1979})\BibitemShut
  {NoStop}%
\bibitem [{\citenamefont {Hastad}(1996)}]{hastad1996clique}%
  \BibitemOpen
  \bibfield  {author} {\bibinfo {author} {\bibfnamefont {J.}~\bibnamefont
  {Hastad}},\ }\bibfield  {title} {\bibinfo {title} {Clique is hard to
  approximate within n/sup 1-/spl epsiv},\ }in\ \href@noop {} {\emph {\bibinfo
  {booktitle} {Proceedings of 37th Conference on Foundations of Computer
  Science}}}\ (\bibinfo {organization} {IEEE},\ \bibinfo {year} {1996})\ pp.\
  \bibinfo {pages} {627--636}\BibitemShut {NoStop}%
\bibitem [{\citenamefont {Zuckerman}(2007)}]{v003a006}%
  \BibitemOpen
  \bibfield  {author} {\bibinfo {author} {\bibfnamefont {D.}~\bibnamefont
  {Zuckerman}},\ }\bibfield  {title} {\bibinfo {title} {Linear degree
  extractors and the inapproximability of max clique and chromatic number},\
  }\href {https://doi.org/10.4086/toc.2007.v003a006} {\bibfield  {journal}
  {\bibinfo  {journal} {Theory of Computing}\ }\textbf {\bibinfo {volume}
  {3}},\ \bibinfo {pages} {103} (\bibinfo {year} {2007})}\BibitemShut {NoStop}%
\bibitem [{\citenamefont {Dantzig}(1951)}]{dantzig1951maximization}%
  \BibitemOpen
  \bibfield  {author} {\bibinfo {author} {\bibfnamefont {G.~B.}\ \bibnamefont
  {Dantzig}},\ }\bibfield  {title} {\bibinfo {title} {Maximization of a linear
  function of variables subject to linear inequalities},\ }\href@noop {}
  {\bibfield  {journal} {\bibinfo  {journal} {Activity analysis of production
  and allocation}\ }\textbf {\bibinfo {volume} {13}},\ \bibinfo {pages} {339}
  (\bibinfo {year} {1951})}\BibitemShut {NoStop}%
\bibitem [{\citenamefont {Bostanci}(2020)}]{Bostanci2020}%
  \BibitemOpen
  \bibfield  {author} {\bibinfo {author} {\bibfnamefont {J.}~\bibnamefont
  {Bostanci}},\ }\href@noop {} {\bibinfo {title} {Disjointness}},\ \bibinfo
  {howpublished} {\url{https://github.com/quantumCodewords/disjointness}}
  (\bibinfo {year} {2020})\BibitemShut {NoStop}%
\bibitem [{\citenamefont {Pastawski}\ and\ \citenamefont
  {Preskill}(2017)}]{Pastawski2017}%
  \BibitemOpen
  \bibfield  {author} {\bibinfo {author} {\bibfnamefont {F.}~\bibnamefont
  {Pastawski}}\ and\ \bibinfo {author} {\bibfnamefont {J.}~\bibnamefont
  {Preskill}},\ }\bibfield  {title} {\bibinfo {title} {{Code Properties from
  Holographic Geometries}},\ }\href {https://doi.org/10.1103/PhysRevX.7.021022}
  {\bibfield  {journal} {\bibinfo  {journal} {Physical Review X}\ }\textbf
  {\bibinfo {volume} {7}},\ \bibinfo {pages} {021022} (\bibinfo {year}
  {2017})}\BibitemShut {NoStop}%
\end{thebibliography}%

\end{document}